\newtheorem{theorem}{Theorem}[section]
\newtheorem{proposition}[theorem]{Proposition}
\newtheorem{remark}{Remark}[section]
\newtheorem{assumption}{Assumption}
\def\1g{1\hskip -3pt \mbox{l}}
    \title{On the correlation analysis of illiquid stocks}
\author{
{\sc
Hamdi Ra\"{\i}ssi\footnote{Instituto de Estad\'{\i}stica, PUCV,
Errazuriz 2734, Valpara\'{\i}so, CHILE; email: hamdi.raissi@pucv.cl. The ANID funding Fondecyt 1201898 is acknowledged.}
}
}
\begin{document}

    \maketitle

\abstract{The serial correlations of illiquid stock's price changes are studied, allowing for unconditional heteroscedasticity and time-varying zero returns probability. Depending on the set up, we investigate how the usual autocorrelations can be accommodated, to deliver an accurate representation of the price changes serial correlations. We shed some light on the properties of the different serial correlations measures, by mean of Monte Carlo experiments. The theoretical arguments are illustrated considering shares from the Chilean stock market.}

\quad

\textbf{\em Keywords:}
Illiquid stocks; Serial correlation; Weak dependence; Unconditional heteroscedasticity; Time-varying zero returns probability.\\

\vspace*{.4cm}\textit{JEL Classification:} C13; C22; C58.

\newpage

\section{Introduction}
\label{intro}

In quantitative finance, it is common to examine the serial correlations of financial assets.
For a daily stock return $r_t$, the correlations $corr(r_t,r_{t-h}$) ($h>0$), can be used for the momentum trading strategy. On the other hand, there is a huge literature considering the serial correlations to test if financial variables are efficient (see e.g. Harris and Kew (2014), K\"{o}chling, M\"{u}ller and Posch (2019), Hurn, Martin, Phillips and Yu (2021,p41) and references therein). The autocorrelations can be investigated to specify the so-called ARMA-GARCH models (see Francq and Zako\"{\i}an (2019) Section 5.2, or McNeil, Frey and Embrechts (2015) Section 4.2.3). Several tools for investigating the serial correlations, taking into account for nonlinearities and/or heteroscedasticity, are available in the literature (see e.g. Romano and Thombs (1996), Lobato, Nankervis and Savin (2002), Patilea and Ra\"{\i}ssi (2013) or Dalla, Giraitis and Phillips (2019)). Such kind of tools are widely used nowadays for many tasks. However, at the best of our knowledge, the presence of unconditional changes for the zeros returns probability is not taken into account in the correlations analysis of returns.

The main message conveyed by this study, is that the non-standard effects of the zero returns scheme, may lead to a misleading estimation of the financial variables serial correlations. First, let us underline that it is advisable to concentrate on the non zero returns. To see this through simple considerations, let us suppose for instance that $corr(r_t,r_{t-1})<0$ ($h=1$), with $E(r_t)=0$. Hence, it can be expected that observing $r_{t-1}<0$ is likely to be followed by $r_{t}>0$, and \emph{vice et versa}. Nevertheless, there is obviously no reason to think that a negative correlation, makes $r_{t-1}=0$ informative with regard to predicting $r_{t}<0$ or $r_{t}>0$. In short, the zero returns are useless, if we are interested in price movements. As a consequence, we aim to investigate how past price changes can be possibly correlated with present price changes, correcting the effect induced by $0<P(r_t=0)<1$ (i.e. use $corr(r_t,r_{t-h}|r_tr_{t-h}\neq0)$ rather than $corr(r_t,r_{t-h}))$. We have

\begin{equation*}\label{correq}
corr(r_t,r_{t-h}|r_tr_{t-h}\neq0)=corr(r_t,r_{t-h})\frac{P(r_t\neq0)}{P(r_tr_{t-h}\neq0)},
\end{equation*}
assuming $E(r_t)=0$ and provided that $P(r_tr_{t-h}\neq0)>0$, $h=0,1,\dots$, which shows that considering the standard $corr(r_t,r_{t-h})$ for our task, can be quite misleading if $0<P(r_t\neq0)<1$. However, let us clarify that we are not claiming that zero returns are not informative at all (see Boudoukh, Richardson and Whitelaw (1994), Section 3.1 and references therein on this subject). For instance, it is well known that $r_{t-1}=0$ could be a piece of information for expecting whether $r_{t}=0$ or not (see Truquet (2019)). Actually, we shall use the possible non constant structure of the zero returns, to improve the estimation of the non zero returns correlations. Finally, let us clarify that $corr(r_t,r_{t-h}|r_tr_{t-h}\neq0)$ is useful for specifying the ARMA part of the $(r_t)$ dynamics, while the $corr(|r_t|^\delta,|r_{t-h}|^\delta|r_tr_{t-h}\neq0)$, for some $\delta>0$, is analyzed for the specifying the stylized financial effects, using for instance some GARCH model. The reader is referred to Patilea and Ra\"{\i}ssi (2021) for more details on the $corr(|r_t|^\delta,|r_{t-h}|^\delta|r_tr_{t-h}\neq0)$ analysis.

We aim to solve our issue following the approach introduced by Dunsmuir and Robinson (1981), Section 2, to take into account for missing values in time series. This consists in considering an imputation process $(a_t)$, such that $a_t=1$ if the series is observed at date $t$, and $a_t=0$ if it is missing. This process, originally introduced by Parzen (1963), is called the "amplitude modulated" sequence in the literature. Then, for some partially observed process $(x_t)$, the empirical serial correlations are computed using $z_t=a_tx_t$ (i.e. replacing the missing observations by zeros). Note that classically $P(a_t=1)$ is assumed constant (see e.g. Stoffer and Toloi (1992) or Bondon and Bahamonde (2012) among others). Such procedure implies to correct adequately the empirical correlations, in order to retrieve a standard normal asymptotic distribution (see Dunsmuir and Robinson (1981)). In our context, as zero returns means irregularly observed price changes, it will be mathematically convenient to express our problem in a similar way. More precisely, we define $a_t=1$ if $r_t\neq0$ at date $t$, and $a_t=0$ otherwise (see Hurn, Martin, Phillips and Yu (2021,p33) for intraday data). This makes our problem close to that of the missing value case, although $r_t$ is not formally missing when it is equal to zero. Nevertheless, the process $(a_t)$ is possibly non-stationary in our case. In addition, the presence of unconditional heteroscedasticity may also induce additional distortions in the autocorrelations measure. For these reasons, a corrected tool will be necessary to depict the correlation structure of the data adequately. Noting that we can reasonably assume that $E(r_t)=0$ in general, then the corrections introduced in the paper are multiplicative. Hence, the main goal of the paper is to solve the adequate \emph{estimation} of $corr(r_t,r_{t-h}|r_tr_{t-h}\neq0)$ for eventual \emph{linear} specification. On the other hand, let us underline that since $E(|r_t|^\delta)\neq0$, then the analysis of $corr(|r_t|^\delta,|r_{t-h}|^\delta|r_tr_{t-h}\neq0)$ lead to correct the \emph{testing} of the presence of the persistence of shocks or some long memory effects in the \emph{volatility}, see Patilea and Ra\"{\i}ssi (2021). Now let us examine the following examples illustrating that non-standard features for $P(a_t=1)$ can be commonly observed for individual stocks.

\subsection{An example from the Chilean stock market}
\label{chi-stock}

It is important to note that stocks with a large amount of zero returns, are quite common in all financial markets. In particular, it is well known that such a feature is especially observed for emerging markets (see e.g. Lesmond (2005,p424-426)). For the sake of illustration of the arguments presented above, we considered the closing prices of some firms taken from the Chilean stock market.\footnote{The author is grateful to Andres Celedon for research assistance.} The data can be downloaded from the Yahoo Finance website. In Figure \ref{ns-stocks}, we plotted the daily log-returns of stocks, that seem to exhibit non-constant zero returns probabilities. For several stocks, we can observe a large amount of years where the probability of observing a price change has a given level, and then, a quick shift can be noticed. 
Examining the Molymet stock in detail, we can observe a quick increase of the probability of observing a daily price change in the end of 2010. This could be explained by a capital increase of more than 216 million US Dollars, announced during the extraordinary shareholders meeting by August 13th, 2010. Note that more than 12.7 million new stocks have been issued in the end of October/early November 2010 on that occasion. Since then, it appears that the probability of a daily price change has never declined to its previous level.
The quick increase of the daily price change probability for the Security bank stock, may be explained by the merger by absorption of the Dresdner Bank Lateinamerika in September 2004. In addition, the Security Bank issued more than 32.8 million new stocks after the capital increase announced during the extraordinary shareholders meeting by December 29th, 2004. Abrupt decreases of the liquidity levels can be also observed. For instance, the quick decrease of the daily trade probability for the Provida stock seems to be a consequence of the take-over bid of Metlife on Provida during September 2013. Other stocks seem to display long-run trending behaviors with regard to the non zero returns probability, as the Paz, HF or Cruzados stocks. The increase of $P(a_t=1)$ for the Vapores stock during the 2000's can be observed in a similar way for many stocks of the Chilean Stock market. Such a behavior is certainly the consequence of the fast development of the Chilean economy during this period.
The consequences of such non-standard behaviors on the serial correlation estimation are further investigated in Section \ref{real-analysis}.

The structure of this study is as follows. In section \ref{non-stationary-t-f}, the theoretical framework is outlined. The main setting for the analysis of  $corr(r_t,r_{t-h}|r_tr_{t-h}\neq0)$ is also defined. In Section \ref{first-order}, the relevant serial correlation we may need to estimate the correlation structure of illiquid stocks is first highlighted. Subsequently, corrections for the autocorrelations in the stationary and non stationary cases are presented in Section \ref{stationary-weak} and Section \ref{non-stationary}.
Thereafter, Monte Carlo experiments are conducted to assess the finite sample properties of the serial correlations estimators in Section \ref{num-exp}. We also make additional comments on the autocorrelations of the Chilean stock market data presented above. Some concluding remarks are given in Section \ref{concl}.

Throughout the paper the following notation is used. Let us consider a one-period profit-and-loss random variable $r_t$. We assume that $r_1,\dots,r_n$ are observed, with $n$ the sample size. We denote by $\mathcal{F}_{t-1}=\sigma\{r_{l}:l<t\}$ the $\sigma$-field generated by the past values of $r_t$. Recall that $a_t=0$ if $r_t=0$ and $a_t=1$ if not. The almost sure convergence is denoted by $\stackrel{a.s.}{\longrightarrow}$, and the convergence in distribution is signified by $\stackrel{d}{\longrightarrow}$. The convergence in probability is denoted by $\stackrel{p}{\longrightarrow}$.

\section{The theoretical framework}
\label{non-stationary-t-f}

In time series econometrics, there is a significant body of works dealing with long-run phenomena. For example, reference can be made to the huge cointegration literature. In the field of financial returns, Mikosch and St\u{a}ric\u{a} (2004) and St\u{a}ric\u{a} and Granger (2005), have documented the presence of a long-run unconditional time-varying variance. This pattern can be noted especially when the period under study is relatively large. In order to take into account for such a behavior, Engle and Rangel (2008) have proposed the spline GARCH model. This model combines stochastic effects in order to capture the short run GARCH effects, and a deterministic specification to describe the long-run unconditional changes of the variance. The reader is also referred to Hafner and Linton (2010) for the multivariate case. In this paper, we rely on such an approach to describe the returns non-stationary behavior:

\begin{equation}\label{eq-r-t}
r_t=\sigma_t\epsilon_t,
\end{equation}
where $(\epsilon_t)$ is a random process specified below, and $\sigma_t$ is non-constant deterministic, fulfilling the following assumption.

\begin{assumption}[Unconditional time-varying variance]\label{tv-var}
The $\sigma_{t}$'s are given by $\sigma_{t}=v(t/n)$,
    where $v(\cdot)$ is a
    measurable deter\-ministic and strictly positive function on the interval $(0,1]$, such that
    $\sup_{r\in(0,1]}|v(r)|<\infty$, and satisfies a piecewise Lipschitz
    condition on  $(0,1]$.\footnote{The piecewise Lipschitz condition means: there exists a positive integer $p$ and some mutually disjoint intervals $I_1,\dots,I_p$ with $I_1\cup\dots\cup I_p=(0,1]$ such that $v(r)=\sum_{l=1}^p v_l(r){\bf 1}_{\{r\in I_l\}},$ $r\in(0,1],$ where $v_1(\cdot),\dots,v_p(\cdot)$ are Lipschitz smooth functions on $I_1,\dots,I_p,$ respectively.}
\end{assumption}

Let us underline that the Dahlhaus (1997) rescaling device is often used in order to describe long-run effects (see e.g. Phillips and Xu (2006), Xu and Phillips (2008), Cavaliere  and Taylor (2007,2008), Patilea and Ra\"{\i}ssi (2012,2013) or Wang, Zhao and Li (2019)). A triangular notation should be adopted in view of Assumption \ref{tv-var}. However, the double subscript is suppressed for notational simplicity. In view of the above, $V(r_t)$ can be non-constant, but we also possibly allow $V(r_t|\mathcal{F}_{t-1})$ to be stochastic, as exposed below. Let us mention that Patilea and Ra\"{\i}ssi (2014) have proposed tests, for determining if a purely deterministic specification of the variance is sufficient, or additional GARCH effects must be taken into account.

In a similar way to the above literature, we suppose that the unconditional probability of observing a daily price change is time-varying. More precisely, it is assumed that the probabilities $P(a_ta_{t-h}=1)$, $h=0,1,\dots,m$, are equal to deterministic non constant functions $g_h(t/n)$. Note that we do not exclude the presence of stochastic short time effects (i.e. the $P(a_t=1|\mathcal{F}_{t-1})$ is stochastic), so $(a_t)$ is possibly dependent.

\begin{assumption}[Non constant unconditional probabilities]\label{tv-prob} The time-varying probabilities $P(a_ta_{t-h}=1)$ are given by $g_h(t/n)$,
    where $g_h(\cdot)$, $h=0,1,\dots,m$, are
    measurable deter\-ministic functions, such that $0<c< g_h(t/n)\leq1$ on the interval $(0,1]$, and each $g_h(\cdot)$ satisfies a piecewise Lipschitz condition on $(0,1]$.\footnote{For $u<0$, the different functions considered in Assumption \ref{tv-var} and \ref{tv-prob} are set constant, that is $g_h(u)=\lim_{u\downarrow0}g_h(u)$ and $v(u)=\lim_{u\downarrow0}v(u)$.}
\end{assumption}

Assumption \ref{tv-var} and \ref{tv-prob} are quite general, and allow for a wide type of specifications, as trending behaviors or abrupt breaks for instance. If the functions $g_h(\cdot)$ or $v(\cdot)$ are non-constant, then the returns process is not stationary. In view of Assumption \ref{tv-prob}, we have $P(a_ta_{t-h}=1)<1$, at least for some subperiod in the sample. Concerning the non-constant variance, the studies mentioned above provide evidence of unconditional heteroscedasticity for financial variables with $P(a_t=1)=1$. Then, it is likely that the same holds when $P(a_t=1)<1$. In Figure \ref{ns-stocks}, we can observe abrupt breaks as well as trending behaviors, for the probabilities of observing a daily price change. It is reasonable to think that such structural changes may happen for both the non zero returns probability and the variance. 
In addition, let us mention that, using the above mentioned rescaling device, Truquet (2019) studied the pointwise estimation of the time-varying transition probabilities

\begin{equation}\label{cond-prob}
P(a_t=1|a_{t-1}=1)=\frac{P(a_ta_{t-1}=1)}{P(a_{t-1}=1)},
\end{equation}
with application to capture non-constant features for South African stock returns. In our case, (\ref{cond-prob}) suggests that $g_h(\cdot)$, $h=0,1,\dots,m$, are non constant. Finally, let us make the following comment on Assumption \ref{tv-var} and \ref{tv-prob} jointly.

\begin{remark}\label{on-label}
If $P(a_t=1)$ is time-varying, then it is easy to see that $(r_t)$ will be heteroscedastic whether $\sigma_t$ is constant or not. However, in the sequel we will refer to the case where $\sigma_t$ is time-varying (resp. constant), as heteroscedastic (resp. homoscedastic) in the sense that the prices changes (i.e. eliminating the zero returns) are heteroscedastic (resp. homoscedastic). When $\sigma_t$ is constant and $P(a_t=1)$ is time-varying, we will simply say that $(r_t)$ is not stationary, or have a time-varying zero returns probability to avoid confusion, although it is heteroscedastic. In short, $V(r_t|a_t=1)$ is constant if and only if $\sigma_t$ is constant, in view of Assumption \ref{identifiability} below.
\end{remark}

The identifiability of (\ref{eq-r-t}) is ensured by considering the following assumption.

\begin{assumption}[Identifiability]\label{identifiability}
We assume that $E(\epsilon_t|a_t=1)=0$ and $E(\epsilon_t^2|a_t=1)=1$.
\end{assumption}

Note that the above condition is expressed conditional to $a_t=1$, as we concentrate on the price changes. Usually in models such as (\ref{eq-r-t}) we set $E(\epsilon_t^2)=1$. However, allowing for a possible time-varying $P(a_{t}=1)$, then $E(\epsilon_t^2)$ is in general non-constant in our case. The condition $E(\epsilon_t|a_t=1)=0$ implies that $E(r_t)=0$, which is realistic since the stocks returns means are not significant in general.

If the sources of non stationarity are removed in (\ref{eq-r-t}), that is the unconditional heteroscedasticity and the time-varying probability, it can be expected that the remainder is stationary. The following assumption formalizes this idea.

\begin{assumption}[Strict stationarity in the time-varying framework]\label{stationarity}
Given $a_t\times\dots\times a_{t+k}=a_{t+h}\times\dots\times a_{t+k+h}=1$, the vectors $(\epsilon_t,\dots,\epsilon_{t+k})'$ and $(\epsilon_{t+h},\dots,\epsilon_{t+k+h})'$ have the same joint distribution for any $k\in\mathbb{N}$ and any $(t,h)'\in\mathbb{Z}^2$.
\end{assumption}


%
%
%

Now we turn to the dependence set-up used for $(r_t)$. In the literature, a wide variety of models are available to capture the financial assets dynamics. Among many others, reference can be made to the large class of GARCH models first introduced by Engle (1982) (see the Bollerslev (2008) glossary). 
Hence, we consider a weak dependence concept, that allow for a wide range of specifications. In the sequel, we denote by
$$\alpha_{\epsilon}(h)= \sup_{A\in \sigma(\epsilon_u, u\leq t), B\in \sigma(\epsilon_u, u\geq
t+h)}\left|P(A\cap B)-P(A)P(B)\right|,$$
the mixing coefficients of $(\epsilon_t)$. The reader is referred
to Davidson (1994) for details about the $\alpha$-mixing coefficients. Let us also introduce
$\|\epsilon_t\|_q=\left(E\|\epsilon_t\|^q\right)^{1/q}$, where $\|.\|$ denotes the
Euclidean norm.\\

\begin{assumption}[Weak dependence condition for strong consistency]\label{mixing-sc}
There exist $\mu_1=2+2\nu$, $\nu>0$, 
and $1\leq\mu_2<1+\nu<2\mu_2$,
such that $\sum_{h=0}^{\infty}\{\alpha_{\epsilon}(h)\}^{1-1/2\mu_2}$ $<$ $\infty$ and $\sup_t\left\|\epsilon_t\right\|_{\mu_1}$ $<$ $\infty$.
\end{assumption}

\begin{assumption}[Martingale difference condition for asymptotic normality]\label{mixing-an}
(a) We assume that $E(\epsilon_t|\mathcal{F}_{t-1})=0$, and $sup_t\left\|\epsilon_t\right\|_{4}$ $<$ $\infty$.
(b) There exists $\tilde{\mu}_1=4+4\nu$ such that $\sup_t\left\|\epsilon_t\right\|_{\tilde{\mu}_1}$ $<$ $\infty$.
\end{assumption}

Note that $(r_t)$ is allowed to be correlated under our mixing assumption. In Assumption \ref{mixing-sc}, we make the usual trade-off between moment and mixing conditions (see Phillips (1987) and references therein).
The Assumption \ref{mixing-an}(a) is consistent with an efficient market null hypothesis. Other dependence, prediction assumptions can be used to state the convergence and the asymptotic normality results below, as soon as convergence results are available. For instance, the reader is referred to Davidson (1994), Chapters 16 and 17, for the mixingale (asymptotic unpredictability) or the Near Epoch Dependence (NED) concepts. 

\section{Serial correlations in presence of zero returns}
\label{first-order}

In this part, the serial correlations analysis of illiquid stocks is discussed in the light of the assumptions made above.
For mathematical ease, the following notations are adopted. The correlations of the process $(r_t)$ are given by

$$\rho_{0t}(h):=corr(r_t,r_{t-h})=\gamma_{0t}(h)\gamma_{0t}(0)^{-1},\:\mbox{with}\:\gamma_{0t}(h):=E(r_tr_{t-h}).$$
As usual, we consider vectors of these autocorrelations:
$\Gamma_{0t}(m):=(\rho_{0t}(1),\dots,\rho_{0t}(m))'$, for a given $m$, fixed by the practitioner. The subscript zero means that the zero returns are included in the autocorrelations. Under Assumption \ref{tv-var}-\ref{stationarity}, some basic computations show that

\begin{equation}\label{comput}
\rho_{0t}(h)=\frac{\rho_\epsilon(h)g_h(t/n)}{\left\{g_0(t/n)g_0\left((t-h)/n\right)\right\}^{\frac{1}{2}}},
\quad\mbox{where}\quad\rho_\epsilon(h)=E(\epsilon_t\epsilon_{t-h}|a_ta_{t-h}=1),
\end{equation}
is constant. 
In the sequel, we consider $\Gamma_{\epsilon}(m):=(\rho_\epsilon(1),\dots,\rho_\epsilon(m))'$. It is clear that the correlations of $(r_t)$ are not constant, due to the time-varying probabilities of observing a price change. As a consequence, these non-constant linear dynamics may be viewed as spurious. 
On the other hand, the unconditional heteroscedasticity does not play any role in the theoretical relation (\ref{comput}). 
Now, if in addition $P(a_ta_{t-h}=1)$ and $P(a_t=1)$ are assumed constant, then $\rho_{0t}(h)$ does not depend on time:

\begin{equation}\label{auto-stat-case}
\rho_0(h)=\rho_\epsilon(h)\frac{P(a_ta_{t-h}=1)}{P(a_t=1)}.
\end{equation}
In such a case, the notations can be simplified into $\gamma_{0t}(h)=\gamma_{0}(h)=E(r_tr_{t-h})$, $\rho_{0}(h):=\gamma_{0}(h)/\gamma_{0}(0)$, and $\Gamma_{0}(m):=(\rho_{0}(1),\dots,\rho_{0}(m))'$. It is interesting to note that the dependence effects in $(a_t)$ should be carefully taken into account to correctly measure the correlation structure. 
%
This last case is similar to the classical treatment of missing values in time series (see e.g. Stoffer and Toloi (1992)). 

In view of Figure \ref{ns-stocks}, non-stationary effects can be often assumed for $(a_t)$. 
In addition, for computational reasons some bias could be introduced in the estimation step from the non-constant time-varying variance. These combined non-stationary behaviors can lead to erroneous conclusions. 
For instance, suppose that we are interested to measure the eventual non-zero correlations of a stock at different periods. In the case of the Molymet stock, using (\ref{comput}) may spuriously lead to conclude that the potential linear link between the returns is affected by the 2010 increase of capital. 
We can conclude therefore that if we are interested in analyzing the correlation structure of the price changes, then $\rho_\epsilon(h)$ should be considered rather than $\rho_{0t}(h)$.
The methodology to handle the non-standard behaviors described above, will consist in applying a scale correction to the classical autocorrelation estimator

\begin{equation*}\label{corr-classic}
\widehat{\Gamma}_0(m):=(\hat{\rho}_0(1),\dots,\hat{\rho}_0(m))'\:\mbox{with}\:
\hat{\rho}_0(h):=\hat{\gamma}_0(h)\hat{\gamma}_0(0)^{-1},\:\mbox{and}\:\hat{\gamma}_0(h):=\frac{1}{n}\sum_{t=1+h}^{n}r_t r_{t-h},
\end{equation*}
in order to obtain a consistent estimator of $\Gamma_{\epsilon}(m):=(\rho_\epsilon(1),\dots,\rho_\epsilon(m))'$. 

\subsection{The stationary case}
\label{stationary-weak}

In this part, we suppose that $\sigma_t=\sigma$, $P(a_t=1)$ and $P(a_ta_{t-h}=1)$ are constant. 
%
%
In view of (\ref{auto-stat-case}), let us introduce the autocorrelations estimator

\begin{equation}\label{co-ds}
\hat{\rho}_{pr}(h):=\frac{\hat{\gamma}_0(h)\hat{\gamma}_a(h)^{-1}}{\hat{\gamma}_0(0)\hat{\gamma}_a(0)^{-1}}
=\hat{\rho}_0(h)\frac{\hat{\gamma}_a(0)}{\hat{\gamma}_a(h)},\:\mbox{where}\:\hat{\gamma}_a(h):=n^{-1}\sum_{t=h+1}^{n}a_ta_{t-h},
\end{equation}
and $$\widehat{\Gamma}_{pr}(m):=\widehat{\Lambda}_{\gamma}\widehat{\Gamma}_0(m),\quad
\widehat{\Lambda}_{\gamma}=diag\left(\frac{\hat{\gamma}_a(0)}{\hat{\gamma}_a(1)},\dots,\frac{\hat{\gamma}_a(0)}{\hat{\gamma}_a(m)}\right),
$$
see e.g. Stoffer and Toloi (1992) for this kind of correction. The subscript "pr" means that we remain with corrections based on probabilities. Introduce

$$\Lambda_{\gamma}=diag\left(\frac{P(a_t=1)}{P(a_ta_{t-1}=1)},\dots,\frac{P(a_t=1)}{P(a_ta_{t-m}=1)}\right).$$
The following proposition ensures that $\widehat{\Gamma}_{pr}(m)$ delivers a correct picture of the price changes autocorrelations. Recall that $\Gamma_{\epsilon}(m)=(\rho_\epsilon(1),\dots,\rho_\epsilon(m))'$.

\begin{proposition}\label{propostu2}
Suppose that $(r_t)$ is strictly stationary ergodic, such that $E(r_t^2)<\infty$ and $E(r_t)=0$. Then, we have $\widehat{\Gamma}_{pr}(m)\stackrel{a.s.}{\longrightarrow}\Gamma_{\epsilon}(m),$ as $n\to\infty$.
\end{proposition}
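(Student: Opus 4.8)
The plan is to establish the convergence coordinate by coordinate and then assemble the pieces through the factorization (\ref{auto-stat-case}). Since $a_t=\mathbf{1}_{\{r_t\neq 0\}}$ is a measurable function of $r_t$, the bivariate sequence $(r_t,a_t)$ inherits strict stationarity and ergodicity from $(r_t)$; consequently each of the products $r_tr_{t-h}$ and $a_ta_{t-h}$, being a measurable function of a finite block of the shifted sequence, is itself strictly stationary and ergodic. This reduces the problem to applying Birkhoff's ergodic theorem to each of the sample averages appearing in $\widehat{\Gamma}_{pr}(m)$.

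First I would check integrability, so that the ergodic theorem delivers finite limits. For $\hat{\gamma}_0(h)$ the Cauchy--Schwarz inequality gives $E|r_tr_{t-h}|\le E(r_t^2)<\infty$, so $r_tr_{t-h}\in L^1$ and $\hat{\gamma}_0(h)\stackrel{a.s.}{\longrightarrow}\gamma_0(h)=E(r_tr_{t-h})$. For $\hat{\gamma}_a(h)$ the summand $a_ta_{t-h}$ is valued in $\{0,1\}$, hence trivially integrable, and $\hat{\gamma}_a(h)\stackrel{a.s.}{\longrightarrow}E(a_ta_{t-h})=P(a_ta_{t-h}=1)$. The shift of the starting index from $t=1$ to $t=h+1$ together with the normalisation by $1/n$ (rather than $1/(n-h)$) introduces only a vanishing edge term of order $O(1)/n$, which does not affect the Ces\`{a}ro limit.

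Next I would pass to the ratios by continuous mapping, which requires the denominators to be bounded away from zero in the limit. We have $\gamma_0(0)=E(r_t^2)>0$, the degenerate case $r_t\equiv 0$ being excluded as otherwise the correlations are undefined, so $\hat{\rho}_0(h)=\hat{\gamma}_0(h)/\hat{\gamma}_0(0)\stackrel{a.s.}{\longrightarrow}\rho_0(h)$. Likewise, in the stationary regime Assumption~\ref{tv-prob} yields $P(a_ta_{t-h}=1)\ge c>0$, so $\hat{\gamma}_a(0)/\hat{\gamma}_a(h)\stackrel{a.s.}{\longrightarrow}P(a_t=1)/P(a_ta_{t-h}=1)$. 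Taking the almost-sure product of these two limits and invoking the identity (\ref{auto-stat-case}),
\[
\hat{\rho}_{pr}(h)=\hat{\rho}_0(h)\frac{\hat{\gamma}_a(0)}{\hat{\gamma}_a(h)}
\stackrel{a.s.}{\longrightarrow}\rho_0(h)\frac{P(a_t=1)}{P(a_ta_{t-h}=1)}
=\rho_\epsilon(h),
\]
for each $h=1,\dots,m$; stacking the coordinates gives $\widehat{\Gamma}_{pr}(m)\stackrel{a.s.}{\longrightarrow}\Gamma_\epsilon(m)$.

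There is no serious obstacle here: the argument is in essence an application of the ergodic theorem followed by the continuous mapping theorem for ratios. The only points demanding care are (i) verifying that the products $r_tr_{t-h}$ and $a_ta_{t-h}$ are genuinely strictly stationary and ergodic, which holds because ergodicity is preserved under measurable transformations of the shift, and (ii) checking that both limiting denominators, $\gamma_0(0)$ and $P(a_ta_{t-h}=1)$, are strictly positive so that the ratios are well defined and the relevant map is continuous at the limit. The identification of the limit as precisely $\rho_\epsilon(h)$ is then immediate from the previously established relation (\ref{auto-stat-case}).
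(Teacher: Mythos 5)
Your proposal is correct and follows essentially the same route as the paper's proof: apply the ergodic theorem to $\hat{\gamma}_0(h)$ and $\hat{\gamma}_a(h)$, then conclude via the identity (\ref{auto-stat-case}). Your additional checks (integrability, positivity of the limiting denominators, negligibility of edge terms) are details the paper leaves implicit, not a different argument.
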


Before moving forward to the case where $(a_t)$ is dependent, let us mention that the autocorrelations are often displayed with the confidence bounds under the hypothesis $\Gamma_{0t}(m)=\Gamma_{\epsilon}(m)=0$. For instance, it may serve to identify an horizon $h$ where the autocorrelations seem to vanish. 

\begin{proposition}\label{propostu2bis}
Suppose that the assumptions of Proposition \ref{propostu2} hold true. Then, under Assumption \ref{mixing-an}(a), we have
\begin{equation}\label{eq-norm}
n^{\frac{1}{2}}\widehat{\Gamma}_{pr}(m)\stackrel{d}{\rightarrow}
     \mathcal{N}(0,\Lambda_{\gamma}\Sigma_{pr}\Lambda_{\gamma}),
\end{equation}
as $n\to\infty$, and where $\Sigma_{pr}=\sigma_r^{-4}\Sigma_{\Upsilon}$, $\sigma_r^{2}=E(r_t^2)$
with $\Sigma_{\Upsilon}=E\left(\Upsilon_t\Upsilon_{t}'\right)$, and  $\Upsilon_t=(r_tr_{t-1},\dots,r_tr_{t-m})'$.
In addition, we have as $n\to\infty$,
\begin{equation}\label{eq-sig}
\widehat{\Sigma}_{\Upsilon}:=n^{-1}\sum_{t=1+m}^{n}\Upsilon_t\Upsilon_{t}'\stackrel{a.s.}{\longrightarrow}E\left(\Upsilon_t\Upsilon_{t}'\right),
\quad\hat{\sigma}_r^2:=n^{-1}\sum_{t=1}^{n}r_t^2\stackrel{a.s.}{\longrightarrow}\sigma_r^2,
\end{equation}
and
\begin{equation}\label{eq-a}
\hat{\gamma}_a(h)\stackrel{a.s.}{\longrightarrow}P(a_ta_{t-h}=1),\:h=0,1,\dots,m.
\end{equation}
\end{proposition}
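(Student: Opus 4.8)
The plan is to isolate a martingale central limit theorem for the raw cross-products, and then to recover the announced distribution by combining it, through Slutsky's lemma, with the almost sure limits of the scaling factors. First I would record the structural consequence of Assumption \ref{mixing-an}(a): since $\sigma_t=\sigma$ is constant in this section and $E(\epsilon_t\mid\mathcal{F}_{t-1})=0$, the process $(r_t)$ is a martingale difference sequence with respect to $(\mathcal{F}_t)$. Hence $\gamma_0(h)=E(r_tr_{t-h})=E\{r_{t-h}E(r_t\mid\mathcal{F}_{t-1})\}=0$ for every $h\geq1$, so by (\ref{auto-stat-case}) we also have $\rho_\epsilon(h)=0$ and $\Gamma_\epsilon(m)=0$; this is what legitimates the centering at the origin in (\ref{eq-norm}). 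The same martingale property shows that $\Upsilon_t=(r_tr_{t-1},\dots,r_tr_{t-m})'$ satisfies $E(\Upsilon_t\mid\mathcal{F}_{t-1})=0$, since $E(r_tr_{t-h}\mid\mathcal{F}_{t-1})=r_{t-h}E(r_t\mid\mathcal{F}_{t-1})=0$, so $(\Upsilon_t)$ is itself a martingale difference sequence.

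Before the normality I would dispatch the almost sure statements (\ref{eq-sig}) and (\ref{eq-a}), which are pure ergodic-theorem facts. Being fixed measurable functions of finitely many coordinates of the stationary ergodic process $(r_t,a_t)$, each of the sequences $(\Upsilon_t\Upsilon_t')$, $(r_t^2)$ and $(a_ta_{t-h})$ is stationary and ergodic. Their first moments are finite: $\sup_t\|\epsilon_t\|_4<\infty$ gives $E(r_t^4)<\infty$, whence by Cauchy--Schwarz every entry $E(r_t^2\,|r_{t-i}r_{t-j}|)$ of $E(\Upsilon_t\Upsilon_t')$ is finite; $E(r_t^2)<\infty$ is assumed; and $a_ta_{t-h}\in\{0,1\}$ is bounded with $E(a_ta_{t-h})=P(a_ta_{t-h}=1)$. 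Birkhoff's ergodic theorem then yields (\ref{eq-sig}) and (\ref{eq-a}). In particular $\hat\gamma_a(0)=n^{-1}\sum_t a_t\stackrel{a.s.}{\to}P(a_t=1)$ (using $a_t^2=a_t$), so by the continuous mapping theorem $\widehat\Lambda_\gamma\stackrel{a.s.}{\to}\Lambda_\gamma$, while $\hat\gamma_0(0)=\hat\sigma_r^2\stackrel{a.s.}{\to}\sigma_r^2$.

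The heart of the argument is the central limit theorem
$$n^{-1/2}\sum_{t=1+m}^{n}\Upsilon_t\stackrel{d}{\longrightarrow}\mathcal{N}(0,\Sigma_\Upsilon),$$
in which the limiting variance is exactly the one-step second-moment matrix $\Sigma_\Upsilon=E(\Upsilon_t\Upsilon_t')$ precisely because the martingale difference property forces the cross-time covariances $E(\Upsilon_t\Upsilon_s')$ to vanish for $s\neq t$. I would obtain it from the Cram\'er--Wold device together with the central limit theorem for stationary ergodic martingale differences (Billingsley): for any fixed $\lambda\in\mathbb{R}^m$, $(\lambda'\Upsilon_t)$ is a stationary ergodic square-integrable martingale difference sequence, so $n^{-1/2}\sum\lambda'\Upsilon_t\stackrel{d}{\to}\mathcal{N}(0,\lambda'\Sigma_\Upsilon\lambda)$, and arbitrariness of $\lambda$ gives the joint convergence. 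I expect verifying the hypotheses of this martingale CLT to be the main technical point: beyond square-integrability (already secured by $E(r_t^4)<\infty$), one must confirm that ergodicity forces the averaged conditional variance $n^{-1}\sum_t E(\Upsilon_t\Upsilon_t'\mid\mathcal{F}_{t-1})$ to converge to $\Sigma_\Upsilon$, which supplies the stabilisation the theorem requires.

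It then remains to assemble the pieces. Writing $n^{1/2}\widehat\Gamma_0(m)=\hat\gamma_0(0)^{-1}\,n^{-1/2}\sum_{t}\Upsilon_t$ up to the finitely many boundary terms (which carry a factor $n^{-1/2}$ and therefore vanish in probability), the displayed CLT combined with $\hat\gamma_0(0)\stackrel{a.s.}{\to}\sigma_r^2$ gives, by Slutsky's lemma, $n^{1/2}\widehat\Gamma_0(m)\stackrel{d}{\to}\mathcal{N}(0,\sigma_r^{-4}\Sigma_\Upsilon)=\mathcal{N}(0,\Sigma_{pr})$. Finally, since $\widehat\Gamma_{pr}(m)=\widehat\Lambda_\gamma\widehat\Gamma_0(m)$ and $\widehat\Lambda_\gamma\stackrel{a.s.}{\to}\Lambda_\gamma$, a last application of Slutsky's lemma yields $n^{1/2}\widehat\Gamma_{pr}(m)\stackrel{d}{\to}\Lambda_\gamma\,\mathcal{N}(0,\Sigma_{pr})=\mathcal{N}(0,\Lambda_\gamma\Sigma_{pr}\Lambda_\gamma)$, which is (\ref{eq-norm}).
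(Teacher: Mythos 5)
Your proposal is correct and takes essentially the same route as the paper's proof: the paper likewise observes that $(\Upsilon_t)$ is a strictly stationary ergodic martingale difference sequence with $E(\Upsilon_t)=0$ and finite second moments, invokes the CLT for such sequences (Francq and Zako\"{\i}an (2019), Corollary A.1, i.e.\ the Billingsley-type result you cite) to obtain $n^{-1/2}\sum_{t}\Upsilon_t\stackrel{d}{\longrightarrow}\mathcal{N}(0,\Sigma_{\Upsilon})$, and then concludes by Slutsky's lemma using the ergodic-theorem limits of $\hat{\gamma}_0(0)$ and $\hat{\gamma}_a(h)$ established in the proof of Proposition \ref{propostu2}. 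Your extra verifications (the fourth-moment bound via Cauchy--Schwarz, the vanishing of $\Gamma_{\epsilon}(m)$ that justifies the centering, and the negligible boundary terms) simply make explicit steps the paper leaves implicit.
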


The asymptotic covariance matrix in (\ref{eq-norm}) can be estimated using (\ref{eq-sig}) and (\ref{eq-a}). 


\subsection{The non stationary case}
\label{non-stationary}

Now both unconditional heteroscedasticity and time-varying zero returns probability are allowed. The cases where we have only unconditional heteroscedasticity, or only non-constant zero returns probability are briefly discussed in Remarks \ref{het-solo} and \ref{prob-solo} below. 
From (\ref{comput}), the unconditional heteroscedasticity does not play any role in $\Gamma_{0t}(m)$. However, in practice the numerator and denominator are computed separately on the whole sample. As a consequence, some heteroscedasticity effects are produced that must be taken into account for the estimation of $\Gamma_{\epsilon}(m)$. To this end, let us introduce the following estimator:

\begin{equation}\label{co-ns}
\widehat{\rho}_{vpr}(h):=\frac{\hat{\gamma}_{0}(0)}{\hat{\gamma}_{ar^2}(h)}\widehat{\rho}_0(h),
\quad\mbox{and}\quad
\widehat{\Gamma}_{vpr}(h):=\widehat{\Phi}_\gamma\widehat{\Gamma}_0(h),
\end{equation}
where

$$\widehat{\Phi}_\gamma=diag\left(\frac{\hat{\gamma}_{0}(0)}{\hat{\gamma}_{ar^2}(1)},
\dots,\frac{\hat{\gamma}_{0}(0)}{\hat{\gamma}_{ar^2}(m)}\right),
\:\mbox{and}\:\hat{\gamma}_{ar^2}(h)=(n-h)^{-1}\sum_{t=1+h}^{n}r_t^2\frac{\hat{p}_{t,t-h}}{\hat{p}_{t}}.$$
The kernel estimators of $P(a_ta_{t-h}=1)$ and $P(a_t=1)$ are given by\footnote{If $\hat{p}_{t}=0$, then the corresponding term in the sum defining $\hat{\gamma}_{ar^2}(h)$ is set equal to zero. Note that in such a case, $r_t^2$ is often equal to zero.}

$$\hat{p}_{t,t-h}=\sum_{j=1+h}^{n} w_{tj}(b)a_ja_{j-h},\quad
\mbox{and}\quad\hat{p}_{t}=\sum_{j=1}^{n} w_{tj}(b) a_j,$$
where $w_{tj}=\left(\sum_{j=1}^nK_{tj}\right)^{-1}K_{tj}$, and
    $$K_{ti}=\left\{
                  \begin{array}{c}
                    K((t-i)/nb)\quad \mbox{if}\quad t\neq i\\
                    0  \quad\mbox{if}\quad t=i,\\
                  \end{array}
                \right.$$
The kernel function on the real line $K(\cdot)$, and the bandwidth $b$ fulfill the following conditions.

\begin{assumption}[Kernel and bandwidth]\label{k-b}
\begin{itemize}
\item[(a)] $K(\cdot)$ is a continuous kernel function defined on the real line with compact support, such that $0\leq \sup_z K(z)<R$ for some finite real number $R$, and $\int_{-\infty}^{\infty}K(z)dz=1$.
\item[(b)] As $n\to\infty$, $b+\frac{1}{nb^2}\to0$.
\end{itemize}
\end{assumption}

Of course different bandwidths can be used for $\hat{p}_{t,t-h}$ and $\hat{p}_{t}$. Nevertheless, for ease of exposition, the kernel estimators are presented with the same bandwidth. Define
$$\Phi_\gamma=diag\left(\frac{\gamma_{0}(0)}{\gamma_{ar^2}(1)},
\dots,\frac{\gamma_{0}(0)}{\gamma_{ar^2}(m)}\right),$$
where
$$\gamma_{0}(0)=\int_{0}^{1}v^2(s)g_0(s)ds,\quad\mbox{and}\quad\gamma_{ar^2}(h)=\int_{0}^{1}v^2(s)g_h(s)ds.$$
We are now ready to state the following asymptotic results, in our non standard framework.

\begin{proposition}\label{propostu3}
Suppose that the sequence $(r_t)$ fulfills (\ref{eq-r-t}). Under Assumption \ref{tv-var}-\ref{mixing-sc}, then as $n\to\infty$, $\widehat{\Gamma}_{vpr}(m)\stackrel{p}{\longrightarrow}\Gamma_{\epsilon}(m).$
\end{proposition}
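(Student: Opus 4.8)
The plan is to exploit the algebraic simplification of the estimator: cancelling the common factor $\hat{\gamma}_0(0)$ in (\ref{co-ns}) gives $\widehat{\rho}_{vpr}(h)=\hat{\gamma}_0(h)/\hat{\gamma}_{ar^2}(h)$ for each $h\in\{1,\dots,m\}$. By the continuous mapping theorem it then suffices to identify the probability limits of the numerator and the denominator separately, and to verify that the denominator limit is strictly positive; stacking the $m$ components yields the vector statement. Concretely, I would establish that $\hat{\gamma}_0(h)\stackrel{p}{\longrightarrow}\rho_\epsilon(h)\gamma_{ar^2}(h)$ and $\hat{\gamma}_{ar^2}(h)\stackrel{p}{\longrightarrow}\gamma_{ar^2}(h)>0$, whence $\widehat{\rho}_{vpr}(h)\stackrel{p}{\longrightarrow}\rho_\epsilon(h)$.

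For the numerator, I would use $r_tr_{t-h}=\sigma_t\sigma_{t-h}\epsilon_t\epsilon_{t-h}$ together with the observation that $\epsilon_t\epsilon_{t-h}$ vanishes unless $a_ta_{t-h}=1$, so that Assumption \ref{identifiability}--\ref{stationarity} give $E(r_tr_{t-h})=v(t/n)v((t-h)/n)\rho_\epsilon(h)g_h(t/n)$. Summing and invoking the piecewise Lipschitz property of $v(\cdot)$ (so $v((t-h)/n)$ may be replaced by $v(t/n)$ up to an error that vanishes except on the finitely many break intervals, whose contribution is negligible), the Riemann sum $n^{-1}\sum_{t=1+h}^n E(r_tr_{t-h})$ converges to $\rho_\epsilon(h)\int_0^1 v^2(s)g_h(s)ds=\rho_\epsilon(h)\gamma_{ar^2}(h)$. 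The centred fluctuation $\hat{\gamma}_0(h)-E(\hat{\gamma}_0(h))\stackrel{p}{\longrightarrow}0$ then follows from a law of large numbers for heteroscedastic $\alpha$-mixing arrays under the moment and mixing bounds of Assumption \ref{mixing-sc} (the same argument delivers a.s., hence in-probability, convergence of every such Cesàro sum below).

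For the denominator I would split
\begin{equation*}
\hat{\gamma}_{ar^2}(h)=(n-h)^{-1}\sum_{t=1+h}^n r_t^2\frac{g_h(t/n)}{g_0(t/n)}
+(n-h)^{-1}\sum_{t=1+h}^n r_t^2\left(\frac{\hat{p}_{t,t-h}}{\hat{p}_t}-\frac{g_h(t/n)}{g_0(t/n)}\right).
\end{equation*}
Since $E(r_t^2)=v^2(t/n)g_0(t/n)$, the first (oracle) term has mean $(n-h)^{-1}\sum_t v^2(t/n)g_h(t/n)\to\gamma_{ar^2}(h)$ and converges in probability by the same law of large numbers; moreover $\gamma_{ar^2}(h)\geq c\inf_s v^2(s)>0$ by Assumption \ref{tv-var}--\ref{tv-prob}, which secures the non-degeneracy of the denominator. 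It remains to show the second (kernel-error) term is $o_p(1)$.

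The hard part is controlling this kernel-error term. Using the decomposition
$\frac{\hat{p}_{t,t-h}}{\hat{p}_t}-\frac{g_h(t/n)}{g_0(t/n)}=\frac{\hat{p}_{t,t-h}-g_h(t/n)}{\hat{p}_t}+\frac{g_h(t/n)}{g_0(t/n)}\cdot\frac{g_0(t/n)-\hat{p}_t}{\hat{p}_t}$,
I would bound the term by $(n-h)^{-1}\sum_t r_t^2\cdot C\bigl(\sup_t|\hat{p}_{t,t-h}-g_h(t/n)|+\sup_t|\hat{p}_t-g_0(t/n)|\bigr)$, where the constant uses $g_h\leq1$ and, once uniform consistency is in hand, $\hat{p}_t\geq c/2$ with probability tending to one (legitimate because $g_0\geq c>0$). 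As $(n-h)^{-1}\sum_t r_t^2=O_p(1)$, the whole matter reduces to the uniform consistency of the kernel estimators, $\sup_{1\leq t\leq n}|\hat{p}_{t,t-h}-g_h(t/n)|\stackrel{p}{\longrightarrow}0$ and likewise for $\hat{p}_t$. I would prove this via the bias/variance split: the bias $\sum_j w_{tj}(b)g_h(j/n)-g_h(t/n)$ vanishes uniformly since $b\to0$ and $g_h$ is piecewise Lipschitz (the finitely many break points contributing negligibly for small $b$), while the stochastic part $\sum_j w_{tj}(b)(a_ja_{j-h}-g_h(j/n))$ is handled uniformly in $t$ by a maximal inequality together with a pointwise variance bound of order $(nb)^{-1}$, which is negligible under the condition $1/(nb^2)\to0$ of Assumption \ref{k-b}(b); the possible dependence in $(a_t)$, a measurable function of $(\epsilon_t)$, is absorbed by the mixing of Assumption \ref{mixing-sc}. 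This uniform kernel consistency is the delicate step, and it closes the argument.
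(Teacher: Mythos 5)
Your overall architecture coincides with the paper's proof: the cancellation $\widehat{\rho}_{vpr}(h)=\hat{\gamma}_0(h)/\hat{\gamma}_{ar^2}(h)$, the identity $E(r_tr_{t-h})=\sigma_t\sigma_{t-h}\rho_\epsilon(h)g_h(t/n)$ with the piecewise-Lipschitz Riemann-sum argument (the paper's (\ref{maipu}), with remainder $O(n^{-1})$), the Hansen-type SLLN for the centred fluctuations under Assumption \ref{mixing-sc}, and the oracle/kernel-error decomposition of $\hat{\gamma}_{ar^2}(h)$ (the paper's (\ref{xxxxxsf})) are all exactly the paper's steps. The divergence is in how the kernel-error term is killed, and here your route has a genuine gap. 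You reduce everything to \emph{uniform} consistency $\sup_{1\leq t\leq n}|\hat{p}_t-g_0(t/n)|\stackrel{p}{\longrightarrow}0$ obtained from "a maximal inequality together with a pointwise variance bound," but under the stated assumptions this is not justified as sketched: Assumption \ref{mixing-sc} only gives summability of $\alpha_\epsilon(h)^{1-1/2\mu_2}$ (no geometric decay, so no off-the-shelf exponential inequality), and a Chebyshev union bound over $n$ time points costs a factor $n$ against a variance of order $(nb^2)^{-1}$ (the rate the paper actually proves via Davydov's inequality), which diverges. Even with the sharper $(nb)^{-1}$ variance rate one would need a discretization/chaining step over an $O(b^{-1})$ grid plus an equicontinuity argument for the kernel weights — none of which you supply, and Assumption \ref{k-b}(a) only assumes continuity of $K(\cdot)$, not the Lipschitz regularity such a step would lean on.

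The gap is avoidable, because uniform consistency is far stronger than what the proposition needs: the kernel error enters only through the Ces\`aro average $(n-h)^{-1}\sum_t r_t^2(\hat{p}_{t,t-h}/\hat{p}_t-g_h(t/n)/g_0(t/n))$. The paper instead establishes the pointwise rates $\hat{p}_t=P(a_t=1)+O_p\bigl(\tfrac{1}{\sqrt{n}b}\bigr)+O(b)$ and $\hat{p}_{t,t-h}=P(a_ta_{t-h}=1)+O_p\bigl(\tfrac{1}{\sqrt{n}b}\bigr)+O(b)$, with bounds uniform in $t$ at the level of second moments (its (\ref{eq11})--(\ref{eq22})), and then concludes the averaged term is $o_p(1)$; a H\"older/Cauchy--Schwarz bound of $E\bigl(r_t^2|\hat{p}_t-g_0(t/n)|\bigr)$ term by term makes this rigorous under exactly the conditions $b+1/(nb^2)\to0$, with no sup-norm control required. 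Two smaller points: your positivity bound $\gamma_{ar^2}(h)\geq c\inf_s v^2(s)>0$ presumes $\inf_s v(s)>0$, which Assumption \ref{tv-var} does not literally state (strict positivity plus piecewise Lipschitz on $(0,1]$ does not rule out $\inf v=0$ at a piece endpoint); positivity of $\int_0^1 v^2(s)g_h(s)ds$ holds anyway since the integrand is strictly positive, so argue it that way. Also, your claimed lower bound $\hat{p}_t\geq c/2$ with probability tending to one would again require uniformity; the paper sidesteps this via its convention that terms with $\hat{p}_t=0$ are set to zero and the moment-level bounds above.
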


\begin{proposition}\label{propostu3bis}
Suppose that Assumption \ref{mixing-an} holds true. Then, under the assumptions of Proposition \ref{propostu3}, we have
$$n^{\frac{1}{2}}\widehat{\Gamma}_{vpr}(m)\stackrel{d}{\rightarrow}
     \mathcal{N}(0,\Phi_{\gamma}\Sigma_{vpr}\Phi_{\gamma}),$$
as $n\to\infty$, and where $\Sigma_{vpr}=\gamma_{0}(0)^{-2}\widetilde{\Sigma}_{\Upsilon}$, with $\widetilde{\Sigma}_{\Upsilon}:=\lim_{n\to\infty}n^{-1}\sum_{t=1}^n
E\left(\Upsilon_t\Upsilon_{t}'\right)$. In addition, as $n\to\infty$, we have $\widehat{\widetilde{\Sigma}}_{\Upsilon}:=n^{-1}\sum_{t=1+m}^{n}\Upsilon_t\Upsilon_{t}'
\stackrel{a.s.}{\longrightarrow}\widetilde{\Sigma}_{\Upsilon}.$
\end{proposition}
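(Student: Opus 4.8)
The plan is to reduce the statement to a martingale central limit theorem for $(\Upsilon_t)$ and then to transport the limit through the deterministic normalisers by Slutsky's lemma. First I would simplify the estimator algebraically: since $\widehat\rho_{vpr}(h)=\hat\gamma_0(h)/\hat\gamma_{ar^2}(h)$, the two factors $\hat\gamma_0(0)$ cancelling, one has
$$n^{\frac{1}{2}}\widehat\Gamma_{vpr}(m)=\widehat D_n\,S_n+o_p(1),\qquad S_n:=n^{-\frac{1}{2}}\sum_{t=1+m}^n\Upsilon_t,$$
with $\widehat D_n=\mathrm{diag}\big(\hat\gamma_{ar^2}(1)^{-1},\dots,\hat\gamma_{ar^2}(m)^{-1}\big)$; the $o_p(1)$ absorbs the finitely many edge terms distinguishing the ranges $1+h\le t$ and $1+m\le t$, each of size $n^{-1/2}O_p(1)$.

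The core step is to prove $S_n\stackrel{d}{\to}\mathcal N(0,\widetilde\Sigma_\Upsilon)$. Under Assumption \ref{mixing-an}(a), $E(r_t\mid\mathcal F_{t-1})=\sigma_tE(\epsilon_t\mid\mathcal F_{t-1})=0$, and as $r_{t-1},\dots,r_{t-m}$ are $\mathcal F_{t-1}$-measurable this gives $E(\Upsilon_t\mid\mathcal F_{t-1})=0$; hence $(\Upsilon_t)$ is a martingale difference array (triangular through $\sigma_t=v(t/n)$), which in particular forces $E(\Upsilon_t)=0$, i.e. $\Gamma_\epsilon(m)=0$, matching the efficient-market reading of the assumption and the centering of the stated limit. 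I would then apply a martingale central limit theorem for heterogeneous arrays. Its Lyapunov condition follows from $\sup_t\|\epsilon_t\|_{4+4\nu}<\infty$ in Assumption \ref{mixing-an}(b), which yields $\sup_t\|r_tr_{t-h}\|_{2+2\nu}<\infty$ since $v(\cdot)$ is bounded; the required stabilisation $n^{-1}\sum_tE(\Upsilon_t\Upsilon_t'\mid\mathcal F_{t-1})\stackrel{p}{\to}\widetilde\Sigma_\Upsilon$ is obtained by writing $E(\Upsilon_t\Upsilon_t'\mid\mathcal F_{t-1})$ as its unconditional mean plus a centred mixing remainder, averaging the remainder to zero through the weak-dependence law of large numbers, and recognising $n^{-1}\sum_tE(\Upsilon_t\Upsilon_t')$ as a Riemann sum of a piecewise-Lipschitz integrand---products of $v(t/n)^2v((t-h)/n)v((t-k)/n)$ with the relevant $g$-functions, the purely stochastic factor being constant by Assumption \ref{stationarity}---which converges to $\widetilde\Sigma_\Upsilon$.

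It remains to pin down the normalisers and assemble the pieces. From the proof of Proposition \ref{propostu3} I would reuse $\hat\gamma_0(0)\stackrel{p}{\to}\gamma_0(0)$ and the kernel-based convergence $\hat\gamma_{ar^2}(h)\stackrel{p}{\to}\gamma_{ar^2}(h)$, the ratio $\hat p_{t,t-h}/\hat p_t$ being bounded since $g_0\ge c>0$, so that $\widehat D_n\stackrel{p}{\to}D:=\mathrm{diag}\big(\gamma_{ar^2}(1)^{-1},\dots,\gamma_{ar^2}(m)^{-1}\big)$. Slutsky's lemma then gives $n^{1/2}\widehat\Gamma_{vpr}(m)\stackrel{d}{\to}\mathcal N(0,D\widetilde\Sigma_\Upsilon D)$, and a direct computation verifies $D\widetilde\Sigma_\Upsilon D=\Phi_\gamma\Sigma_{vpr}\Phi_\gamma$, as the two factors $\gamma_0(0)$ in $\Phi_\gamma$ cancel the $\gamma_0(0)^{-2}$ in $\Sigma_{vpr}$. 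For the final claim, $\widehat{\widetilde\Sigma}_\Upsilon\stackrel{a.s.}{\to}\widetilde\Sigma_\Upsilon$ is a strong law of large numbers for the mixing array $(\Upsilon_t\Upsilon_t')$: its entries $r_t^2r_{t-h}r_{t-k}$ admit a finite $(1+\nu)$-moment by Assumption \ref{mixing-an}(b), so the same strong-consistency machinery as under Assumption \ref{mixing-sc}, together with the Riemann-sum identification of the means, applies.

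The step I expect to be the main obstacle is the stabilisation of the conditional variance in the martingale central limit theorem: decoupling the deterministic rescaling $v(t/n)$ from the stochastic dependence of $(\epsilon_t)$ so that the time average of $E(\Upsilon_t\Upsilon_t'\mid\mathcal F_{t-1})$ converges to the integral defining $\widetilde\Sigma_\Upsilon$. This is precisely where Assumptions \ref{tv-var}, \ref{stationarity} and \ref{mixing-sc} are used jointly, and where the piecewise-Lipschitz regularity is indispensable for replacing sums by integrals.
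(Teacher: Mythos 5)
Your proposal is correct and follows essentially the same route as the paper: a martingale-difference CLT for $(\Upsilon_t)$ (the paper cites Francq and Zako\"{\i}an, Theorem A.3) with the conditional-variance stabilisation handled by Hansen's (1991) SLLN, then Slutsky via the convergences $\hat{\gamma}_0(0)\to\gamma_0(0)$ and $\hat{\gamma}_{ar^2}(h)\to\gamma_{ar^2}(h)$ already established in the proof of Proposition \ref{propostu3}, and Hansen's SLLN again for $\widehat{\widetilde{\Sigma}}_{\Upsilon}$. The only cosmetic differences are that you verify the Lindeberg requirement through a Lyapunov moment bound from Assumption \ref{mixing-an}(b), where the paper argues the Lindeberg condition directly via the factorisation $\lambda'\Upsilon_t=v_\lambda(t/n)\xi_{\lambda t}$ and Assumption \ref{stationarity}, and that you make explicit the cancellation $\widehat{\rho}_{vpr}(h)=\hat{\gamma}_0(h)/\hat{\gamma}_{ar^2}(h)$ and the identity $D\widetilde{\Sigma}_{\Upsilon}D=\Phi_{\gamma}\Sigma_{vpr}\Phi_{\gamma}$, which the paper leaves implicit.
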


Now we shall compare the stationary case and the non-stationary case. If we suppose that the conditions of Proposition \ref{propostu2} hold (the variance and the zero return probability are constant), then we can write:

$$\int_{0}^{1}v^2(s)g_h(s)ds=\sigma^2P(a_ta_{t-h}=1),\:\mbox{and}\:
\int_{0}^{1}v^2(s)g_0(s)ds=\sigma^2P(a_t=1).$$
As a consequence, the correction (\ref{co-ns}) can be applied to the stationary case as well. The reverse is not true in general, as the combined effects of the heteroscedasticity and non-constant liquidity levels must be specifically taken into account. On the other hand, the "vpr" needs to estimate the probability structure, on the contrary to the "pr" correction. In the real data analysis below, a tool for identifying the "pr" and "vpr" cases is provided.

Let us end this section, by considering briefly the case where we have unconditional heteroscedasticity, and constant probability on one hand; and the case where the time-varying probability is the unique source of non-stationarity. 

\begin{remark}[Unconditional heteroscedasticity, constant zero returns probabilities]\label{het-solo}
Let us consider the assumptions of Proposition \ref{propostu3}, strengthened by the assumption that $(a_t)$ is strictly stationary (i.e. $P(a_ta_{t-h}=1)$, $h=0,1,\dots,m$, are constant). In such a case, we have

$$\int_{0}^{1}v^2(s)g_0(s)ds=P(a_t=1)\int_{0}^{1}v^2(s)ds,$$
and

$$\int_{0}^{1}v^2(s)g_h(s)ds=P(a_ta_{t-h}=1)\int_{0}^{1}v^2(s)ds,$$
for any $h\in\{1,\dots,m\}$. Therefore, the non-constant variance effect vanishes, as we can write $\Phi_\gamma=\Lambda_{\gamma}$. Hence, in view of (\ref{maipu}), it is easy to see that
$$\widehat{\Gamma}_{pr}(m)\stackrel{a.s.}{\longrightarrow}\Gamma_{\epsilon}(m),$$
in the case of a time-varying variance, and a constant zero returns probability. 
\end{remark}

\begin{remark}[Constant variance, time-varying zero returns probabilities]\label{prob-solo}
Now, we assume that the conditions of Proposition \ref{propostu3} hold true, strengthened by the assumption that $\sigma_t=\sigma$ is constant. In a similar way as above we write

$$\int_{0}^{1}v^2(s)g_0(s)ds=\sigma^2\int_{0}^{1}g_0(s)ds,$$
and
$$\int_{0}^{1}v^2(s)g_h(s)ds=\sigma^2\int_{0}^{1}g_h(s)ds,$$
for any $h\in\{1,\dots,m\}$. Again, we readily have
$$\widehat{\Gamma}_{pr}(m)\stackrel{a.s.}{\longrightarrow}\Gamma_{\epsilon}(m),$$
in the case of constant variance, and time varying zero returns probability.
\end{remark}

\section{An index for deciding between $\widehat{\Gamma}_{pr}(m)$ and $\widehat{\Gamma}_{vpr}(m)$}
\label{var-prob}

In this part, we provide a tool to determine whether $\widehat{\Gamma}_{vpr}(m)$ or $\widehat{\Gamma}_{pr}(m)$ should be used. 
From the above section, $\widehat{\Gamma}_{pr}(m)$ is suitable if we have $\Delta=0$, with

\begin{equation}\label{suitable-ds}
\Delta=\sum_{h=1}^{m}\Delta_h^2,\quad\mbox{and}\quad\Delta_h=\frac{\int_{0}^{1}v^2(s)g_0(s)ds}{\int_{0}^{1}v^2(s)g_h(s)ds}-
\frac{\int_{0}^{1}g_0(s)ds}{\int_{0}^{1}g_h(s)ds}.
\end{equation}
The use of $\widehat{\Gamma}_{vpr}(m)$, corresponds to the situation where there exists at least some $h\in\{1,\dots,m\}$, such that $\Delta_h\neq0$.
The above alternatives can be viewed, in some sense, as a diagnostic for $(r_t)$. If we decide that $\Delta\neq0$, then this is equivalent to assert that the following statements are not true:

\begin{itemize}
\item[(a)] $\sigma_t=\sigma$ and $P(a_ta_{t-h}=1)$ are constant for all $h\in\{1,\dots,m\}$,
\item[(b)] $\sigma_t$ is non-constant, but $(r_t)$ has a constant zero returns probability,
\item[(c)] $\sigma_t=\sigma$, but $(r_t)$ has a time-varying zero returns probability.
\end{itemize}
If decide that all the $\Delta_h$'s are equal to zero, then we can conjecture that (a), (b) or (c) holds true. This last assertion can only be conjectured, as we may have $\Delta=0$ for some specific non-constant $v(\cdot)$ and $g_h(\cdot)$ functions, $h\in\{0,1,\dots,m\}$.

Now, let us introduce the index $\hat{\kappa}_m=\widehat{\Delta}$, where

$$\widehat{\Delta}=\sum_{h=1}^{m}\widehat{\Delta}_h^2,$$
and
$$\widehat{\Delta}_h=\frac{n^{-1}\sum_{t=1}^{n}r_t^2}{(n-h)^{-1}\sum_{t=1+h}^{n}r_t^2\frac{\hat{p}_{t,t-h}}{\hat{p}_{t}}}
-\frac{n^{-1}\sum_{t=1}^{n}a_t}{(n-h)^{-1}\sum_{t=1+h}^{n}a_ta_{t-h}}.$$
The existence of the $\widehat{\Delta}_h$'s is ensured, at least asymptotically, from Assumption \ref{tv-var} and \ref{tv-prob}. 
%

\begin{proposition}\label{index-prop}
Suppose that Assumptions \ref{tv-var}-\ref{mixing-sc} hold true. If $\Delta=0$, then as $n\to\infty$,
$\hat{\kappa}_m\stackrel{p}{\longrightarrow}0.$
\end{proposition}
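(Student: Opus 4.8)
The plan is to show that each $\widehat{\Delta}_h$ converges in probability to its deterministic counterpart $\Delta_h$, and then to pass to $\hat{\kappa}_m=\sum_{h=1}^m\widehat{\Delta}_h^2$ by the continuous mapping theorem. Since $\Delta=\sum_{h=1}^m\Delta_h^2=0$ forces $\Delta_h=0$ for every $h\in\{1,\dots,m\}$, establishing $\widehat{\Delta}_h\stackrel{p}{\longrightarrow}\Delta_h$ for each fixed $h$ immediately yields $\hat{\kappa}_m\stackrel{p}{\longrightarrow}\sum_{h=1}^m\Delta_h^2=\Delta=0$. The only work, therefore, is to identify and control the four building blocks appearing in $\widehat{\Delta}_h$.

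First I would isolate these four pieces. The first ratio is built from $\hat{\gamma}_0(0)=n^{-1}\sum_{t=1}^n r_t^2$ and $\hat{\gamma}_{ar^2}(h)=(n-h)^{-1}\sum_{t=1+h}^n r_t^2\,\hat{p}_{t,t-h}/\hat{p}_t$, while the second ratio is built from $n^{-1}\sum_{t=1}^n a_t$ and $(n-h)^{-1}\sum_{t=1+h}^n a_ta_{t-h}$. The convergences $\hat{\gamma}_0(0)\stackrel{p}{\longrightarrow}\gamma_0(0)=\int_0^1 v^2(s)g_0(s)ds$ and $\hat{\gamma}_{ar^2}(h)\stackrel{p}{\longrightarrow}\gamma_{ar^2}(h)=\int_0^1 v^2(s)g_h(s)ds$ are exactly the ingredients used in the proof of Proposition \ref{propostu3} to obtain $\widehat{\Phi}_\gamma\stackrel{p}{\longrightarrow}\Phi_\gamma$, so I would simply cite them. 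For the two $a_t$-terms, I would use that $a_t=\mathbf{1}\{\epsilon_t\neq0\}$ is a measurable function of $\epsilon_t$ (as $\sigma_t=v(t/n)>0$), so $(a_t)$ is $\alpha$-mixing with coefficients no larger than those of $(\epsilon_t)$ under Assumption \ref{mixing-sc}. Since $E(a_t)=P(a_t=1)=g_0(t/n)$ and $E(a_ta_{t-h})=P(a_ta_{t-h}=1)=g_h(t/n)$, the deterministic means $n^{-1}\sum_t g_0(t/n)$ and $(n-h)^{-1}\sum_t g_h(t/n)$ converge to $\int_0^1 g_0(s)ds$ and $\int_0^1 g_h(s)ds$ by a Riemann-sum argument justified by the piecewise Lipschitz regularity of Assumption \ref{tv-prob}, while the centered fluctuations vanish by the law of large numbers for mixing sequences.

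Then I would assemble the pieces through the continuous mapping theorem. Because $v(\cdot)$ is strictly positive and bounded by Assumption \ref{tv-var} and each $g_h(\cdot)\geq c>0$ by Assumption \ref{tv-prob}, all four limits are finite and the two denominators $\gamma_{ar^2}(h)=\int_0^1 v^2g_h$ and $\int_0^1 g_h$ are strictly positive; hence the map $(w,x,y,z)\mapsto(w/x-y/z)^2$ is continuous at the limit point. This gives $\widehat{\Delta}_h\stackrel{p}{\longrightarrow}\left(\gamma_0(0)/\gamma_{ar^2}(h)-\int_0^1 g_0/\int_0^1 g_h\right)=\Delta_h$, and summing the continuous squares over the finitely many $h\in\{1,\dots,m\}$ completes the argument.

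The main obstacle, as already in Proposition \ref{propostu3}, is the consistency of the kernel-weighted average $\hat{\gamma}_{ar^2}(h)$: one must control the bias and the stochastic error of the Nadaraya--Watson-type estimators $\hat{p}_{t,t-h}$ and $\hat{p}_t$ sharply enough that, after weighting by the heteroscedastic $r_t^2$, the sum concentrates at $\int_0^1 v^2(s)g_h(s)ds$; this is precisely where Assumption \ref{k-b} on the kernel and bandwidth is used. Since this convergence is exactly what Proposition \ref{propostu3} establishes, here it can be taken as given, and the statement reduces to the elementary law-of-large-numbers and Riemann-sum limits for the $a_t$-terms together with the continuous mapping theorem.
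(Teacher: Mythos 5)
Your argument is correct, and it takes a genuinely different (and in fact more economical) route than the paper. The paper's proof of Proposition \ref{index-prop} first replaces the kernel ratios $\hat{p}_{t,t-h}/\hat{p}_t$ by the true ratios via (\ref{xxxxxsf}) to get $\widehat{\Delta}_h=\widetilde{\Delta}_h+o_p(1)$, then performs an explicit algebraic decomposition of $\widetilde{\Delta}_h$ into four terms, centers the numerators, and uses the null hypothesis $\Delta_h=0$ \emph{inside} the decomposition to show that the deterministic part ($F+G$ in the paper's notation) is $O(n^{-1})$; the stochastic remainder is then written as $\Psi\{n^{-1}\sum_t Z_t\}$ with $E(Z_t)=0$ and killed by Hansen's (1991) SLLN. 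You instead prove the generic consistency $\widehat{\Delta}_h\stackrel{p}{\longrightarrow}\Delta_h$ for every $h$ — assembling $\hat{\gamma}_0(0)$ and $\hat{\gamma}_{ar^2}(h)$ from (\ref{eq3p3}) and (\ref{lluvia}) in the proof of Proposition \ref{propostu3}, and handling the $a_t$-ratios by noting $a_t=\mathbf{1}\{\epsilon_t\neq0\}$ inherits the mixing of $(\epsilon_t)$, so a mixing LLN plus the piecewise-Lipschitz Riemann-sum argument gives the limits $\int_0^1 g_0$ and $\int_0^1 g_h$ — and only invoke $\Delta=0$ at the very end through the continuous mapping theorem, whose applicability you correctly justify by the strict positivity of the limiting denominators ($g_h\geq c>0$ and $v>0$). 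This is essentially the route the paper itself takes for Proposition \ref{index-prop-power}, so your argument has the merit of proving both propositions at once and of making transparent where each assumption enters (including that Assumption \ref{k-b} is needed implicitly, via Proposition \ref{propostu3}, even though the proposition's statement lists only Assumptions \ref{tv-var}--\ref{mixing-sc}). What the paper's heavier decomposition buys that yours does not is the asymptotically linear representation $(\widehat{\Delta}_1,\dots,\widehat{\Delta}_m)'=\Psi\{n^{-1}\sum_{t=1+m}^n Z_t\}+o_p(1)$ in (\ref{avt-der}) under the null, which exposes the $n^{-1}$ rate of the deterministic bias and would be the natural starting point for deriving rates or an asymptotic distribution for $\hat{\kappa}_m$; for the bare claim $\hat{\kappa}_m\stackrel{p}{\longrightarrow}0$, however, your shorter argument is fully sufficient.
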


\begin{proposition}\label{index-prop-power}
Under Assumption \ref{tv-var}-\ref{mixing-sc}, and if there exists $h\in\{1,\dots,m\}$ such that $\Delta_h\neq0$, then we have $\kappa_m\stackrel{p}{\longrightarrow}C>0.$
\end{proposition}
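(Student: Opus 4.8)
The plan is to establish the convergence $\widehat{\Delta}_h\stackrel{p}{\longrightarrow}\Delta_h$ for each fixed $h\in\{1,\dots,m\}$, and then to conclude by the continuous mapping theorem that $\hat{\kappa}_m=\sum_{h=1}^m\widehat{\Delta}_h^2\stackrel{p}{\longrightarrow}\sum_{h=1}^m\Delta_h^2=\Delta$, so that one may take $C=\Delta$. The positivity $C>0$ is then immediate: by hypothesis there exists $h$ with $\Delta_h\neq0$, whence $\Delta=\sum_h\Delta_h^2>0$.

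To obtain $\widehat{\Delta}_h\stackrel{p}{\longrightarrow}\Delta_h$, I would treat the two ratios defining $\widehat{\Delta}_h$ separately. For the first ratio, observe that its numerator is exactly $\hat{\gamma}_0(0)=n^{-1}\sum_{t=1}^n r_t^2$ and its denominator is exactly $\hat{\gamma}_{ar^2}(h)$. These are the very building blocks of $\widehat{\rho}_{vpr}(h)$, and the proof of Proposition \ref{propostu3} already supplies $\hat{\gamma}_0(0)\stackrel{p}{\longrightarrow}\int_0^1 v^2(s)g_0(s)ds$ and $\hat{\gamma}_{ar^2}(h)\stackrel{p}{\longrightarrow}\int_0^1 v^2(s)g_h(s)ds$. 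For the second ratio, I would show $n^{-1}\sum_{t=1}^n a_t\stackrel{p}{\longrightarrow}\int_0^1 g_0(s)ds$ and $(n-h)^{-1}\sum_{t=1+h}^n a_ta_{t-h}\stackrel{p}{\longrightarrow}\int_0^1 g_h(s)ds$. These are the non-stationary analogues of (\ref{eq-a}): since $E(a_t)=g_0(t/n)$ and $E(a_ta_{t-h})=g_h(t/n)$, the normalised expectations converge to the stated integrals by a Riemann-sum argument (the $O(1)$ indices for which $t/n$ and $(t-h)/n$ straddle a discontinuity of the piecewise Lipschitz $g_h$ being negligible in the average), while the centred parts vanish by the law of large numbers permitted under Assumption \ref{mixing-sc}.

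Because $v(\cdot)$ is bounded and strictly positive and $g_h(\cdot)\geq c>0$, all four limiting denominators are bounded away from zero, namely $\int_0^1 v^2(s)g_h(s)ds\geq c\int_0^1 v^2(s)ds>0$ and $\int_0^1 g_h(s)ds\geq c>0$. Hence the continuous mapping theorem applies to each ratio and yields
$$\widehat{\Delta}_h\stackrel{p}{\longrightarrow}\frac{\int_0^1 v^2(s)g_0(s)ds}{\int_0^1 v^2(s)g_h(s)ds}-\frac{\int_0^1 g_0(s)ds}{\int_0^1 g_h(s)ds}=\Delta_h.$$
A further application of continuous mapping (squaring and summing over the fixed finite range $h=1,\dots,m$) then gives $\hat{\kappa}_m\stackrel{p}{\longrightarrow}\Delta$, which completes the argument with $C=\Delta>0$.

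I expect the only genuinely delicate point to be the convergence of $\hat{\gamma}_{ar^2}(h)$, since the nonparametric ratio $\hat{p}_{t,t-h}/\hat{p}_t$ sits inside the $r_t^2$-weighted average. Controlling it requires that $\hat{p}_{t,t-h}$ and $\hat{p}_t$ approximate $g_h(t/n)$ and $g_0(t/n)$ well enough for the kernel bias to wash out after averaging, which rests on the kernel and bandwidth conditions of Assumption \ref{k-b} together with the piecewise Lipschitz smoothness of the $g_h$, and on a law of large numbers for the heteroscedastic mixing sequence $(v^2(t/n)\epsilon_t^2)$. Since this is precisely the mechanism already carried out in the proof of Proposition \ref{propostu3}, the present statement follows once those intermediate convergences are invoked; no estimate beyond the identification $C=\Delta>0$ is then needed.
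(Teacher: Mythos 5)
Your proposal is correct and follows essentially the same route as the paper: the paper likewise reduces $\widehat{\Delta}_h$ to its oracle version with true probabilities (via the decomposition (\ref{xxxxxsf})--(\ref{hatilde}) from the proof of Proposition \ref{propostu3}, the same mechanism you invoke through (\ref{eq3p3}) and (\ref{lluvia})), then applies the SLLN of Hansen (1991) together with Riemann-sum convergence under the piecewise Lipschitz conditions to get $\widehat{\Delta}_h\stackrel{p}{\longrightarrow}\Delta_h$, and concludes with $C=\Delta=\sum_{h=1}^m\Delta_h^2>0$. Your explicit verification that the limiting denominators are bounded away from zero (using $g_h\geq c>0$ and $v(\cdot)>0$) is left implicit in the paper but is the same argument in substance.
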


In view of Proposition \ref{index-prop} and \ref{index-prop-power}, if $\kappa_m$ is somewhat far from zero, then it is advisable to use $\widehat{\Gamma}_{vpr}(m)$.

\section{Numerical illustrations}
\label{num-exp}

In this part, the different tools introduced in the previous sections, are first investigated by mean of Monte Carlo experiments. Then, the autocorrelation structures of the above presented Chilean market's stocks are analyzed more deeply.

\subsection{Monte Carlo experiments}
\label{MC}

The pointwise estimation of the correlation $\Gamma_{\epsilon}(1)\neq0$ is studied ($m=1$). Some comments are also made on the index $\kappa_m$ for $m=1$. 
In all the experiments, the following Data Generating Process (DGP) is considered

\begin{eqnarray}\label{dgp}
r_t&=&\tilde{\sigma}_t\tilde{\epsilon}_t,\quad\tilde{\sigma}_t=\tilde{\sigma}(t/n),\\
\tilde{\sigma}(s)&=&\delta_11_{\{(0,0.4]\}}(s)+\left[\left(\frac{(\delta_2-\delta_1)}{0.2}\right)s+3\delta_1-2\delta_2\right]\times1_{\{(0.4,0.6]\}}(s)\nonumber\\
&+&\delta_2\times1_{\{(0.6,1]\}}(s), s\in(0,1],\nonumber
\end{eqnarray}
where $\delta_1$ and $\delta_2$ are parameters given below. Note that $\tilde{\sigma}_t$ and $\tilde{\epsilon}_t$ are only rescales of $\sigma_t$ and $\epsilon_t$ according to Assumption \ref{identifiability}. We set $\tilde{\epsilon}_t=a_{1t}y_t$. In our simulation experiments, the sequence $y_1,\dots,y_n$ is defined in two ways. Firstly, to illustrate possible bias of correlated non-zero returns,  we consider

\begin{equation}\label{ar-1}
y_t=0.3u_{t-1}+u_t,
\end{equation}
if $u_t$ and $u_{t-1}$ are both different from zero, and $y_t=0$ if not. The innovations process $(u_t)$ is not correlated but dependent and defined as follows. Let us consider the process given in Romano and Thombs (1996): $x_t=\Pi_{i=0}^3z_{t-i}$, where $(z_t)$ is standard Gaussian iid. Introducing the sequence $(a_{2t})$ which is equal to 1 if $|x_t|>0.01$ and $|x_{t-1}|>0.01$, and $a_{2t}=0$ if not, we set $u_t=a_{2t}x_t$. As $(x_t)$ is strictly stationary, it is clear that $P(y_t\neq0)$ is constant. From the specification of $(y_t)$ a dependence is allowed for $(a_t)$ (i.e., we have higher expectation to obtain $y_t=0$ if $y_{t-1}=0$). From a simulation of $(y_t)$ of length $50000$ given in (\ref{ar-1}), we computed $P(y_t\neq0)\approx0.72$. Secondly, to illustrate the asymptotic normality results obtained under the no correlation hypothesis, we set $y_t=\tilde{a}_{2t}x_t$ with $\tilde{a}_{2t}=1$ if $x_t>0.01$, and $\tilde{a}_{2t}=0$ if not. Using again a simulation of length $50000$, we computed $P(y_t\neq0)\approx0.83$. In all the cases, we have $P(a_t=1)=P(a_{1t}=1)P(y_t\neq0)$ as $(y_t)$ and $(a_{1t})$ are taken independent. The possible non-stationary effects for $P(a_t=1)$ will be handled by $P(a_{1t}=1)=g(t/n)$ with

\begin{eqnarray*}
g(s)&=&\delta_31_{\{(0,0.4]\}}(s)+\left[\left(\frac{(\delta_4-\delta_3)}{0.2}\right)s+3\delta_3-2\delta_4\right]\times1_{\{(0.4,0.6]\}}(s)\nonumber\\
&+&\delta_4\times1_{\{(0.6,1]\}}(s)\nonumber
\end{eqnarray*}
The parameters $\delta_1$, $\delta_2$ and $\delta_3$ and $\delta_4$ give the cases of interest for our Monte Carlo experiments:

\begin{itemize}
\item[(i)] Heterogeneous probability, heteroscedastic: Both $P(a_t=1)$ and $\sigma_t$ are time-varying. We set $\delta_1=0.5$, $\delta_2=2$ and $\delta_3=0.3$, $\delta_4=0.9$.
\item[(ii)] Heterogeneous probability, homoscedastic: $P(a_t=1)$ is time-varying and $\sigma_t$ is constant. We set $\delta_1=\delta_2=1$ and $\delta_3=0.3$, $\delta_4=0.9$.
\item[(iii)] Homogeneous probability, heteroscedastic: $P(a_t=1)$ is constant and $\sigma_t$ is time-varying. We set $\delta_1=0.5$, $\delta_2=2$ and $\delta_3=\delta_4=0.6$.
\item[(iv)] Homogeneous probability, homoscedastic: $P(a_t=1)$ and $\sigma_t$ are constant. We set $\delta_1=\delta_2=1$ and $\delta_3=\delta_4=0.6$.
\end{itemize}

In all the experiments, $N=3000$ independent trajectories of (\ref{dgp}) are simulated, with lengths $n=500, 1500$. Such sample sizes roughly correspond to 2 and 6 years of daily returns. All the settings are inspired from the real data study below. The outputs of our simulations are provided in box-plots in Figures \ref{fig-sim-1}-\ref{fig-sim-2bisbis} for the studied autocorrelations, and in Figures \ref{fig-sim-ind-1}-\ref{fig-sim-ind-2} for the index introduced in Section \ref{var-prob}. In Table \ref{conf-int}, the relative frequencies of "pr" and "vpr" autocorrelations outside their 5\% level asymptotic confidence bounds are given. The bandwidths minimize the leave-one-out cross validation criterion (LOOCV) for the estimation of the probability structures $P(a_ta_{t-h}=1)$, $h=1$:

\begin{equation*}
CV_h(b_0)=\sum_{t=1}^{n}\left(\hat{p}_{t}^{(-t)}-a_ta_{t-h}\right)^2,\:
CV_h(b_h)=\sum_{t=1}^{n}\left(\hat{p}_{t,t-h}^{(-t)}-a_ta_{t-h}\right)^2,
\end{equation*}
where $\hat{p}_{t,t-h}^{(-t)}$ is the estimator of $P(a_ta_{t-h}=1)$ using $b_h$, and obtained by omitting $a_ta_{t-h}$.\\

We begin with the situations where the "pr" and "vpr" corrections are both valid (i.e., the (ii), (iii) and (iv) cases, see Figure \ref{fig-sim-ind-2}). 
Its turns out that the "pr" and "vpr" corrections seem unbiased and similar. Indeed, in such cases the $\hat{\rho}_{pr}(1)$ and $\hat{\rho}_{vpr}(1)$ are asymptotically valid for estimating $\rho_{\epsilon}(1)$. 
At first sight there is no loss of efficiency for the more sophisticated "vpr" correction.
Naturally, if there is some evidence that either $P(a_t=1)$ or $\sigma_t$ is constant, then it is better to use the simpler "pr" correction. Since $\hat{\rho}_{pr}(1)$ is relevant in the (ii), (iii) and (iv) cases, it can be seen that the $\hat{\kappa}_1$ converges to zero as the samples are increased (see Figure \ref{fig-sim-ind-1}).

Now, case (i) is examined (i.e., the presence of both heterogeneous variance and non-zero returns probability, see Figure \ref{fig-sim-ind-1}). It can be observed that only the "vpr" correction, seems to handle adequately the double source of non-stationarity. This is in accordance with the fact, that the numerator and the denominator of the serial correlations are computed separately, which induces some undesirable combined non-stationary effects for the $\hat{\rho}_{pr}(1)$ estimator. Such effects cannot be taken into account, when only probabilities are considered for correcting the serial correlations. Moreover, the $\hat{\rho}_{vpr}(1)$ seems to benefit from
the additional piece of information brought from the non-constant $\sigma_t$, on the contrary to the $\hat{\rho}_{pr}(1)$. Indeed, it is found that the $\hat{\rho}_{vpr}(1)$ is more accurate than the  $\hat{\rho}_{pr}(1)$. 
Noting that only $\hat{\rho}_{vpr}(1)$ is valid for estimating $\rho_{\epsilon}(1)$, we can observe that $\hat{\kappa}_1$ does not converges to zero.

In this part, some comments on the confidence intervals for the studied autocorrelations are made. From Table \ref{conf-int}, we can see that the results are the same for the "vpr" and "pr" corrections. This is not surprising as the corrected serial correlations are scale transformation of the correlations that include the zero returns. In addition, we note some finite sample distortions. This is a consequence of the Heteroscedastic Consistent (HC) covariance matrix estimation used to take into account for the non-linearities in the data. Such finite sample shortcomings for the HC estimator is well known in the literature (see e.g. Vilasuso (2001) or Ra\"{\i}ssi (2011)). Simulations for the "pr" correction, not displayed here, show that the relative frequencies of the serial correlations outside their confidence bounds become close to the asymptotic nominal level 5\% for large samples.\\

From our simulation results, we can conclude that if the stock features are incompletely taken into account, then the analysis of the serial correlation between non-zero returns can be misleading. The combined effects of a non-constant zero returns probability and an unconditional heteroscedasticity may produce important bias and a lack of accuracy. However, it is also important to identify each case properly to avoid a correction that is too sophisticated. For these reasons, we recommend consider the $\hat{\kappa}_m$ index of the studied stocks and any event in the company history that can cause a structural change in both the variance and the degree of liquidity.

\subsection{Real data analysis}
\label{real-analysis}

The correlation structures of the non-zero returns of the Chilean stocks presented in the Introduction are studied in detail. The main stocks indexes of the Santiago Stock Exchange (SSE), are the IPSA (the 30 most liquid stocks) and the IGPA (comprising the 30 stocks of the IPSA, plus 48 other stocks according to some criteria). Note that most of the stocks that are part of the IGPA, but not of the IPSA have zero returns. The studied stocks are representative examples of such illiquid assets. The sample sizes are given in Table \ref{stock-corr}.

We first determine the adequate correction for each stock. Recall that the stocks returns, together with the Nadaraya-Watson estimators of $P(a_t=1)$ are displayed in Figures \ref{ns-stocks} and \ref{s-stocks}. In addition, as underlined in the Introduction, events in some companies history strongly suggest that at least $P(a_t=1)$ is not constant. On the other hand, $\hat{\kappa}_1$ is provided in Table \ref{stock-corr}. It emerges that $P(a_t=1)$ seems to be time-varying for most of the stocks, and in particular, the "vpr" correction is advisable for the Cintac and Molymet stocks. We conjecture that in many cases the long-run or abrupt changes in the liquidity degree is not accompanied by a change of the unconditional variance. In particular, it seems that the take-over bid of Metlife on Provida on one hand, and the merger by absorption and capital increase of the Security bank on the other hand, only led to a change in the degree of liquidity. However, it appears that the capital increase of 2010 in the Molymet case resulted in an increase of both the unconditional variance and the probability of a daily price change. 

Now, we turn to the analysis of the first order linear dynamics for the studied stocks. The autocorrelations that seem relevant are in bold type in Table \ref{stock-corr}. The non-standard "vpr" statistics
are close to the "pr" correction when $\hat{\kappa}_1$ is close to zero. Surprisingly, $\hat{\rho}_{pr}(1)$and $\hat{\rho}_{vpr}(1)$ are similar for the Cintac stock, although $\hat{\kappa}_1$ is elevated. For the non-stationary Molymet stock, the "vpr" correction deliver a higher first order correction than the "pr" correction. 
Below, some concluding remarks are provided in the light of the Monte Carlo experiments results and the outputs in Table \ref{stock-corr}.

\section{Conclusion}
\label{concl}

At first glance, for estimating the linear relationship between present and lagged non-zero returns, it makes sense to consider a lack of a daily price movement as missing, without any further processing. In the stationary case, this can be done by applying a scale correction to the usual autocorrelations, using only the probabilities related to observing non-zero returns, as advocated in the classical literature. It turns out that such treatment of the data, is still valid when a structural change can be observed in either the daily non-zero returns probability, or the unconditional variance.
However, when both unconditional heteroscedasticity and non-constant liquidity levels are present, the above correction for the serial autocorrelations is possibly distorted in two ways. Firstly, a bias can be observed. Secondly, a relatively high variability can be noticed. This spurious picture of the serial correlations, can be the consequence of neglected structural changes in the variance and in the liquidity degree. These shortcomings could result in an erroneous analysis of the stock's properties. We propose to solve such issues through a simple scale correction, that takes into account the presence of such a non-stationary behavior in the stock's history. This more realistic correction can lead to accurate assessment of the serial correlations of illiquid assets.

\section*{References}
\begin{description}
\item[] {\sc Bolerslev, T.} (2008) Glossary to ARCH (GARCH). CREATES Research Paper 2008-49.
\item[]{\sc Bondon, P., and Bahamonde, N.} (2012) Least squares estimation of ARCH models with missing observations. \textit{Journal of Time Series Analysis} 33, 880-891.
\item[] {\sc Boudoukh, J., Richardson, M., and Whitelaw, R.} (1994) A tale of three schools: insights on autocorrelations of short-horizon returns. \textit{Review of Financial Studies} 7, 539-573.
\item[]{\sc Brockwell, P.J., and Davis, R.A.} (2006) \textit{Times Series: Theory and Methods}. 2nd edition, Springer, New York.
\item[]{\sc Cavaliere, G., and Taylor, A.M.R.} (2007) Time-transformed unit-root tests for models with non-stationary volatility. \textit{Journal of Time Series Analysis} 29, 300-330.
\item[]{\sc Cavaliere, G., and Taylor, A.M.R.} (2008) Bootstrap Unit Root Tests for Time Series with Nonstationary Volatility. \textit{Econometric Theory} 24, 43-71.
\item[]{\sc Dahlhaus, R.} (1997) Fitting time series models to nonstationary processes. \textit{Annals of Statistics} 25, 1-37.
\item[]{\sc Dalla, V., Giraitis, L., and Phillips, P.C.B.} (2020) Robust tests for white noise and cross-correlation. \textit{Econometric Theory} 1-29.
\item[]{\sc Davidson, J.} (1994) \textit{Stochastic limit theory.} Oxford University Press. New York.
\item[]{\sc Davydov, Y. A.} (1968) Convergence of distributions generated by
stationary stochastic process. \textit{Theory of Probability and its Applications} 13, 691-696.
\item[]{\sc Dunsmuir, W., and Robinson, P. M.} (1981) Estimation of time series models in the presence of missing data. \textit{Journal of the American Statistical Association} 76, 560-568.
\item[]{\sc Engle, R.F.} (1982) Autoregressive conditional heteroscedasticity with estimates of the variance of United Kingdom inflation. \textit{Econometrica} 50, 987-1007.
\item[]{\sc Engle, R.F., and Rangel, J.G.} (2008) The spline GARCH model for unconditional volatility and its global macroeconomic causes. \textit{Review of Financial Studies} 21, 1187-1222.
\item[] {\sc Francq, C., and Zako\"{i}an, J-M.} (2019) \textit{GARCH models : structure, statistical inference, and financial applications.} Wiley.
\item[]{\sc Hafner, C.M., and Linton, O.} (2010) Efficient estimation of a multivariate multiplicative volatility model. \textit{Journal of Econometrics} 159, 55-73.
\item[]{\sc Hansen, B.E.} (1991) Strong laws for dependent heterogeneous processes. \textit{Econometric Theory} 7, 213-221.
\item[]{\sc Harris, D., and Kew, H.} (2014) Portmanteau autocorrelation tests under Q-dependence and heteroscedasticity. \textit{Journal of Time Series Analysis} 35, 203-217.
\item[]{\sc  Hurn, S., Martin, V.L., Phillips, P.C.B. and Yu, J.} (2021) \textit{Financial Econometric Modeling} Oxford University Press.
\item[] {\sc K\"{o}chling, G., M\"{u}ller, J., and Posch, P.N.} (2019) Does the introduction of futures improve the efficiency of Bitcoin? \textit{Finance Research Letters} 30, 367-370.
\item[] {\sc Lesmond, D.A.} (2005) Liquidity of emerging markets. \textit{Journal of Financial Economics} 77, 411-452.
\item[] {\sc Lobato, I.N., Nankervis, J.C., and Savin, N.E.} (2002) Testing for zero autocorrelation in the presence of statistical dependence. \textit{Econometric Theory} 18, 730-743.
\item[]{\sc McNeil, A., Frey, R., and Embrechts, P.} (2015) \textit{Quantitative Risk Management: Concepts, Techniques and Tools}. Princeton University Press. New Jersey.
\item[]{\sc Mikosch, T., and St\u{a}ric\u{a}, C.} (2004) Nonstationarities in financial time series, the long-range dependence, and the IGARCH effects. \textit{Review of Economics and Statistics} 86, 378-390.
\item[]{\sc Parzen, E.} (1963) On spectral analysis with missing observations and amplitude modulation. \textit{Sankhya Series A} 25, 383-392.
\item[]{\sc Patilea, V., and Ra\"{i}ssi, H.} (2012) Adaptive estimation of vector autoregressive models with time-varying variance: application to testing linear causality in mean. \textit{Journal of Statistical Planning and Inference} 142, 2891-2912.
\item[]{\sc Patilea, V., and Ra\"{i}ssi, H.} (2013) Corrected portmanteau tests for VAR models with time-varying variance. \textit{Journal of Multivariate Analysis} 116, 190-207.
\item[] {\sc Patilea, V., and Ra\"{i}ssi, H.} (2014) Testing second order dynamics for autoregressive processes in presence of time-varying variance. \textit{Journal of the American Statistical Association} 109, 1099-1111.
\item[] {\sc Patilea, V., and Ra\"{i}ssi, H.} (2021) Powers correlation analysis of returns with a nonstationary zero-process. Arxiv Working document https:$//$arxiv.org$/$pdf$/$2104.04472.pdf.
\item[]{\sc Phillips, P.C.B.} (1987) Time series regression with a unit root. \textit{Econometrica} 55, 277-301.
\item[]{\sc Phillips, P.C.B., and Xu, K.L.} (2006) Inference in autoregression under heteroskedasticity. \textit{Journal of Time Series Analysis} 27, 289-308.
\item[]{\sc Ra\"{\i}ssi, H.} (2011) Testing linear causality in mean when the number of estimated parameters is high. \textit{ Electronic Journal of Statistics} 5, 507-533.
\item[]{\sc Romano, J. P., and Thombs, L. A.} (1996) Inference for autocorrelations under weak assumptions. \textit{Journal of the American Statistical Association} 91, 590-600.
\item[]{\sc St\u{a}ric\u{a}, C., and Granger, C.} (2005) Nonstationarities in stock returns. \textit{Review of Economics and Statistics} 87, 503-522.
\item[]{\sc Stoffer, D.S., and Toloi, C.M.C.} (1992) A note on the Ljung-Box-Pierce portmanteau statistic with missing data. \textit{Statistics and Probability Letters} 13, 391-396.
\item[]{\sc Truquet, L.} (2019) Local stationarity and time-inhomogeneous Markov chains. \textit{The Annals of Statistics} 47, 2023-2050.
\item[]{\sc Vilasuso, J.} (2001) Causality tests and conditional heteroscedasticity: Monte Carlo evidence. \textit{Journal of Econometrics} 101, 25-35.
\item[]{\sc Wang, S., Zhao, Q., and Li, Y.} (2019) Testing for no-cointegration under time-varying variance. \textit{Economics Letters} 182, 45-49.
\item[]{\sc Xu, K.L., and Phillips, P.C.B.} (2008) Adaptive estimation of autoregressive models with time-varying variances. \textit{Journal of Econometrics} 142, 265-280.
\end{description}

\newpage

\section*{Proofs}

In the sequel, $C>0$ is a constant, possibly different from line to line.

\begin{proof}[Proof of Proposition \ref{propostu2}]
Note that $(r_t)$ and $(a_t)$ are strictly stationary ergodic under the conditions of Proposition \ref{propostu2}. Then, from the ergodic Theorem (see Francq and Zako\"{\i}an (2019) Theorem A.2), we have
\begin{equation}\label{eq0p2}
\hat{\gamma}_0(h)\stackrel{a.s.}{\longrightarrow}\gamma_{0}(h):=E(r_tr_{t-h}),
\end{equation}
for any $h=0,1,\dots,m$. Hence, we clearly obtain
\begin{equation}\label{eq1p2}
\widehat{\Gamma}_0(m)\stackrel{a.s.}{\longrightarrow}\Gamma_{0}(m),
\end{equation}
where we recall that $\Gamma_{0}(m):=(\rho_{0}(1),\dots,\rho_{0}(m))'$. Using again the Ergodic Theorem, we write
\begin{equation}\label{eq2p2}
\hat{\gamma}_a(h)\stackrel{a.s.}{\longrightarrow}P(a_ta_{t-h}=1),
\end{equation}
for any $h=0,1,\dots,m$. In view of (\ref{eq1p2}), (\ref{eq2p2}) and (\ref{auto-stat-case}) the result follows.
\end{proof}

\begin{proof}[Proof of Proposition \ref{propostu2bis}]
First note that $E(\Upsilon_t)=0$, since $(r_t)$ is uncorrelated. In addition, we have $E(\Upsilon_{it}\Upsilon_{jt})<\infty$ for any $1\leq i,j\leq m$. Then, noting that $(\Upsilon_t)$ is a strictly stationary ergodic martingale increments sequence, we have

$$n^{-\frac{1}{2}}\sum_{t=m+1}^{n}\Upsilon_t\stackrel{d}{\longrightarrow}\mathcal{N}(0,\Sigma_{\Upsilon}),$$
from the Lindeberg CLT (see Francq and Zako\"{\i}an (2019) Corollary A.1). Hence, considering (\ref{eq0p2}) with $h=0$, (\ref{eq2p2}) and the Slutsky Lemma, the desired result follows.
\end{proof}

\begin{proof}[Proof of Proposition \ref{propostu3}]
We begin with the numerator of $\hat{\rho}_0(h)$. Denoting by $\alpha_{\Upsilon}$ the $\alpha$-mixing coefficients of $(\Upsilon_t)$, we have $\alpha_{\Upsilon}(|k|+1)\leq\alpha_{\epsilon}(|k|-m)$, setting $\alpha_{\epsilon}(k)=1/4$, for $k<0$.
In view of the above argument, Assumption \ref{mixing-an} implies that $\sum_{k=0}^{\infty}\left(\alpha_{\Upsilon}(k)\right)^{1-1/2\mu_2}<\infty$. In addition, by the H\"{o}lder inequality, we write $\|\Upsilon_t\|_{1+\nu}<\infty$. Then, using the Strong Law of Large Numbers (SLLN) of Hansen (1991), corollary 4, we obtain
$$\hat{\gamma}_0(h)\stackrel{a.s.}{\longrightarrow}\lim_{n\to\infty}n^{-1}\sum_{t=1+h}^{n}E(r_tr_{t-h}),$$
for $h=0,1,\dots,m$. From Assumption \ref{stationarity}, we have

$$n^{-1}\sum_{t=1+h}^{n}E(r_tr_{t-h})=E(\epsilon_t\epsilon_{t-h}|a_ta_{t-h}=1)
\left\{n^{-1}\sum_{t=1+h}^{n}\sigma_t\sigma_{t-h}P(a_ta_{t-h}=1)\right\}.$$
In view of Assumption \ref{tv-var} and \ref{tv-prob}, 
deduce that
\begin{equation}\label{maipu}
n^{-1}\sum_{t=1+h}^{n}\sigma_t\sigma_{t-h}P(a_ta_{t-h}=1)=\int_0^1v^2(s)g_h(s)ds+R_n,
\end{equation}
where the remainder $R_n$ is non stochastic, such that $|R_n|<Cn^{-1}$. Then
$$\hat{\gamma}_0(h)\stackrel{a.s.}{\longrightarrow}\rho_{\epsilon}(h)\int_0^1v^2(s)g_h(s)ds.$$
Similarly, for the denominator we have
\begin{equation}\label{eq3p3}
\hat{\gamma}_0(0)\stackrel{a.s.}{\longrightarrow}\int_0^1v^2(s)g_0(s)ds,
\end{equation}
in view of Assumption \ref{identifiability}. Deduce that
\begin{equation}\label{eq1p3}
\hat{\rho}_0(h)\stackrel{a.s.}{\longrightarrow}\rho_{\epsilon}(h)\frac{\int_0^1v^2(s)g_h(s)ds}{\int_0^1v^2(s)g_0(s)ds}.
\end{equation}

Now, the asymptotic behavior of the kernel estimators is established. Here, as a finite number of breaks is assumed, we consider a smooth function $g_0(\cdot)$ without a loss of generality. Also, we only consider $\hat{p}_{t}$, as the arguments are similar for $\hat{p}_{t,t-h}$. Let $\bar{p}_t=\sum_{j=1}^{n} w_{tj}(b)P(a_j=1)$. From Lemma A(c) in Xu and Phillips (2008), we write
\begin{eqnarray*}
\hat{p}_{t}-\bar{p}_t&=&\sum_{j=1}^{n} w_{tj}(b)(a_j-P(a_j=1))\\
&=&\left[\frac{1}{nb}\sum_{j=1}^{n}K_{tj}(a_j-E(a_j))\right]\times(1+o(1)).
\end{eqnarray*}
On the other hand, we have
\begin{eqnarray*}
E\left(\frac{1}{nb}\sum_{j=1}^{n}K_{tj}(a_j-E(a_j))\right)^2
&=&\frac{1}{(nb)^2}\sum_{i,j=1}^{n}K_{tj}K_{ti}E\{(a_j-E(a_j))(a_i-E(a_i))\}\\
&\leq&{\small\left(\frac{R^2}{nb^2}\right)n^{-1}\sum_{|h|<n}\sum_{i=1+h}^{n}\left|E\{(a_i-E(a_i))(a_{i-h}-E(a_{i-h}))\}\right|}\\
&\leq& \frac{\widetilde{C}}{nb^2} \sum_{|h|<n}\frac{n-h}{n}\alpha_\epsilon(|h|)^{1-1/2\mu_2}\\
&=&O\left(\frac{1}{nb^2}\right),
\end{eqnarray*}
in view of our mixing assumption. The last inequality is obtained from the inequality of Davydov (1968): there exists a constant $C$ such that

$$|E\{(a_j-E(a_j))(a_i-E(a_i))\}|<C\|(a_j-E(a_j))\|_{4\mu_2}\|(a_i-E(a_i))\|_{4\mu_2}\alpha_\epsilon(|h|)^{1-1/2\mu_2}.$$
In view of the above, we obtain
$$\hat{p}_{t}-\bar{p}_t=O_p\left(\frac{1}{\sqrt{n}b}\right).$$
Next, let us write
\begin{eqnarray*}
&&\frac{1}{nb}\sum_{i=1}^nK_{[nr]i}P(a_i=1)\\&=&\frac{1}{nb}\sum_{i=1}^{n}K\left(\frac{[nr]-i}{nb}\right)g_0\left(\frac{i}{n}\right)
\\&=&\frac{1}{b}\left[\int_{1/n}^{2/n}K\left(\frac{[nr]-[ns]}{nb}\right)g_0\left(\frac{[ns]}{n}\right)ds+\dots\right.\\
&&+\left.\int_{1}^{(n+1)/n}K\left(\frac{[nr]-[ns]}{nb}\right)g_0\left(\frac{[ns]}{n}\right)ds+O\left(\frac{1}{n}\right)\right]\\&&
=\frac{1}{b}\left(\int_{1/n}^{(n+1)/n}K\left(\frac{[nr]-ns}{nb}\right)g_0\left(s\right)ds\right)+O\left(\frac{1}{nb}\right)\\&&
\stackrel{z=(s-r)/b}{=}\int_{(\frac{1}{n}-r)/b}^{(1+\frac{1}{n}-r)/b}
K\left(\frac{[nr]-n(r+bz)}{nb}\right)g_0\left(r+bz\right)dz+O\left(\frac{1}{nb}\right)\\&&
=\int_{-\infty}^{\infty}
K\left(\frac{[nr]-nr}{nb}-z\right)g_0\left(r+bz\right)dz+O\left(\frac{1}{nb}\right),
\end{eqnarray*}
where the last equality is obtained for small enough $b$, and since a compact support is assumed for $K(\cdot)$.
Hence, using the Lipschitz condition in Assumption \ref{tv-prob}, deduce that

\begin{equation*}
\frac{1}{nb}\sum_{i=1}^nK_{[nr]i}p_i=g_0(r)+O(b)+O\left(\frac{1}{nb}\right).
\end{equation*}
In view of the above, we may write
\begin{equation}\label{eq11}
\hat{p}_{t}=P(a_t=1)+O_p\left(\frac{1}{\sqrt{n}b}\right)+O(b).
\end{equation}
Similarly, we also have
\begin{equation}\label{eq22}
\hat{p}_{t,t-h}=P(a_ta_{t-h}=1)+O_p\left(\frac{1}{\sqrt{n}b}\right)+O(b).
\end{equation}
Considering the denominator of the serial correlations correcting term, we obtain
\begin{eqnarray}
\hat{\gamma}_{ar^2}(h)&=&(n-h)^{-1}\sum_{t=1+h}^{n}r_t^2\frac{\hat{p}_{t,t-h}}{\hat{p}_{t}}\nonumber\\
&=&(n-h)^{-1}\sum_{t=1+h}^{n}r_t^2\left(\frac{\hat{p}_{t,t-h}}{\hat{p}_{t}}-\frac{P(a_ta_{t-h}=1)}{P(a_t=1)}\right)\nonumber\\
&+&(n-h)^{-1}\sum_{t=1+h}^{n}r_t^2\left(\frac{P(a_ta_{t-h}=1)}{P(a_t=1)}\right).\label{xxxxxsf}
\end{eqnarray}
From (\ref{eq11}) and (\ref{eq22}), and using the SLLN of Hansen (1991), the first term of the above equality converges to zero in probability. On the other hand, using again the SLLN of Hansen (1991), we have

$$(n-h)^{-1}\sum_{t=1+h}^{n}r_t^2\left(\frac{P(a_ta_{t-h}=1)}{P(a_t=1)}\right)
\stackrel{a.s.}{\longrightarrow}\int_0^1v^2(s)g_h(s)ds.$$
Hence, we obtain
\begin{equation}\label{lluvia}
\hat{\gamma}_{ar^2}(h)=\int_0^1v^2(s)g_h(s)ds+o_p(1).
\end{equation}
From (\ref{eq3p3}), (\ref{eq1p3}) and (\ref{lluvia}) the desired result follows.
\end{proof}

\begin{proof}[Proof of Proposition \ref{propostu3bis}]
Let $\lambda\in\mathbb{R}^m$. Using similar arguments as in the proof of Proposition \ref{propostu2bis}, then the SLLN of Hansen (1991) can be used and we obtain

$$n^{-1}\sum_{t=1}^{n}E\left(\lambda'\Upsilon_t\Upsilon_t'\lambda|\mathcal{F}_{t-1}\right)
\stackrel{a.s.}{\longrightarrow}\lim_{n\to\infty}n^{-1}\sum_{t=1}^{n}E(\lambda'\Upsilon_t\Upsilon_t'\lambda).$$
Now, let us write $\lambda'\Upsilon_t=v_\lambda(t/n)\xi_{\lambda t}$, where $E(\xi_{\lambda t}^2|\mathcal{A}_t^c)=1$ and $\mathcal{A}_t^c=\{a_t=0\}\cup\{\cap_{i=1}^m\{a_{t-i}=0\}\}$. Then, introducing $\varsigma_\lambda=\sup_{(0,1]}(v^2(r))$ and $\iota_\lambda=\inf_{(0,1]}(v(r))$, we write
{\footnotesize
\begin{eqnarray*}
n^{-1}\sum_{t=1}^{n}E\left((\lambda'\Upsilon_t)^21_{\{|\lambda'\Upsilon_t|\geq \sqrt{n}\varepsilon\}}\right)&\leq&\varsigma_\lambda\left[n^{-1}\sum_{t=1}^{n}E\left(\xi_{\lambda t}^21_{\{\iota_\lambda|\xi_{\lambda t}|\geq \sqrt{n}\varepsilon\}}\right)\right]\\&\leq&\varsigma_\lambda\left[n^{-1}\sum_{t=1}^{n}E\left(\xi_{\lambda t}^21_{\{|\iota_\lambda\xi_{\lambda t}|\geq \sqrt{n}\varepsilon\}}|\mathcal{A}_t^c\right)\right]\\&=&\varsigma_\lambda E\left(\xi_{\lambda t}^21_{\{\iota_\lambda|\xi_{\lambda t}|\geq \sqrt{n}\varepsilon\}}|\mathcal{A}_t^c\right)\to0,
\end{eqnarray*}}
as $n\to\infty$, and in view of Assumption \ref{stationarity}.
Then, from the CLT for martingale differences sequences in Francq and Zakoïan (2019), Theorem A.3, and using (\ref{eq3p3}) and (\ref{lluvia}) the first statement follows. For the second statement, it is easy to see that the SLLN of Hansen (1991) can be applied using the same arguments as in the proof of Proposition \ref{propostu3}.
\end{proof}

\begin{proof}[Proof of Proposition \ref{index-prop}]
From (\ref{xxxxxsf}) we have

\begin{equation}\label{hatilde}
\widehat{\Delta}_h=\widetilde{\Delta}_h+o_p(1).
\end{equation}
where
$$\widetilde{\Delta}_h=\frac{n^{-1}\sum_{t=1}^{n}r_t^2}{(n-h)^{-1}\sum_{t=1+h}^{n}r_t^2\frac{P(a_ta_{t-h}=1)}{P(a_t=1)}}
-\frac{n^{-1}\sum_{t=1}^{n}a_t}{(n-h)^{-1}\sum_{t=1+h}^{n}a_ta_{t-h}}.$$
Some computations give

{\footnotesize\begin{eqnarray*}
\widetilde{\Delta}_h&=&\frac{n^{-1}\sum_{t=1}^{n}r_t^2-E(r_t^2)}{(n-h)^{-1}\sum_{t=1+h}^{n}r_t^2\frac{P(a_ta_{t-h}=1)}{P(a_t=1)}}\\
&+&\frac{\left(n^{-1}\sum_{t=1}^{n}E(r_t^2)\right)\left((n-h)^{-1}\sum_{t=1+h}^{n}a_ta_{t-h}\right)}
{\left((n-h)^{-1}\sum_{t=1+h}^{n}r_t^2\frac{P(a_ta_{t-h}=1)}{P(a_t=1)}\right)\left((n-h)^{-1}\sum_{t=1+h}^{n}a_ta_{t-h}\right)}\\
&-&\frac{\left(n^{-1}\sum_{t=1}^{n}P(a_t=1)\right)(n-h)^{-1}\sum_{t=1+h}^{n}r_t^2\frac{P(a_ta_{t-h}=1)}{P(a_t=1)}}
{\left((n-h)^{-1}\sum_{t=1+h}^{n}r_t^2\frac{P(a_ta_{t-h}=1)}{P(a_t=1)}\right)\left((n-h)^{-1}\sum_{t=1+h}^{n}a_ta_{t-h}\right)}\\
&-&\frac{n^{-1}\sum_{t=1}^{n}a_t-P(a_t=1)}{(n-h)^{-1}\sum_{t=1+h}^{n}a_ta_{t-h}}\\
&:=&A+B+Q+H,\: say.
\end{eqnarray*}}
The numerator of $B+Q$ is equal to

\begin{eqnarray*}
 &&\left(n^{-1}\sum_{t=1}^{n}E(r_t^2)\right)\left((n-h)^{-1}\sum_{t=1+h}^{n}a_ta_{t-h}-P(a_ta_{t-h}=1)\right)\\
&-&\left(n^{-1}\sum_{t=1}^{n}P(a_t=1)\right)\left((n-h)^{-1}\sum_{t=1+h}^{n}(r_t^2-E(r_t^2))\frac{P(a_ta_{t-h}=1)}{P(a_t=1)}\right)\\
&+&\left(n^{-1}\sum_{t=1}^{n}E(r_t^2)\right)\left((n-h)^{-1}\sum_{t=1+h}^{n}P(a_ta_{t-h}=1)\right)\\
&-&\left(n^{-1}\sum_{t=1}^{n}P(a_t=1)\right)\left((n-h)^{-1}\sum_{t=1+h}^{n}E(r_t^2)\frac{P(a_ta_{t-h}=1)}{P(a_t=1)}\right)\\
&:=&D+E+F+G,\:say.
\end{eqnarray*}
Since it is supposed that $\Delta_h=0$, then we can write

$$\int_{0}^{1}v^2(s)g_0(s)ds\int_{0}^{1}g_h(s)ds
=\int_{0}^{1}g_0(s)ds\int_{0}^{1}v^2(s)g_h(s)ds.$$
Hence, using the usual convergence property of Riemann sums, and in view of the Lipschitz and the finite number of breaks conditions in Assumption \ref{tv-var} and \ref{tv-prob}, we obtain $F+G=O(n^{-1}).$
Gathering the above facts for $h\in\{1,\dots,m\}$, deduce that,
\begin{equation}\label{avt-der}
\left(\widehat{\Delta}_1,\dots,\widehat{\Delta}_m\right)'=\Psi \left\{n^{-1}\sum_{t=1+m}^{n}Z_t\right\}+o_p(1)=o_p(1),
\end{equation}
using the SLLN of Hansen (1991), and where

$$\Psi=\left(
           \begin{array}{ccccc}
             \Psi_1 & 0 & 0 & \left(\int_{0}^{1}v^2(s)g_1(s)ds\right)^{-1} & -\left(\int_{0}^{1}g_1(s)ds\right)^{-1} \\
             0 & \ddots & 0 & \vdots & \vdots \\
             0 & 0 & \Psi_m & \left(\int_{0}^{1}v^2(s)g_m(s)ds\right)^{-1} & -\left(\int_{0}^{1}g_m(s)ds\right)^{-1}\\
           \end{array}
         \right),$$
is a $m\times(2m+2)$ dimensional matrix, with

$$\Psi_h=\left(\frac{\int_{0}^{1}v^2(s)g_0(s)ds}{\int_{0}^{1}v^2(s)g_h(s)ds\int_{0}^{1}g_h(s)ds},
-\frac{\int_{0}^{1}g_0(s)ds}{\int_{0}^{1}v^2(s)g_h(s)ds\int_{0}^{1}g_h(s)ds}\right).$$
The equation (\ref{avt-der}) is justified, since the $(2m+2)$-dimensional (unobserved) vector
\begin{eqnarray*}
Z_t&=&\left(a_ta_{t-1}-P(a_ta_{t-1}=1),(r_t^2-E(r_t^2))\frac{P(a_ta_{t-1}=1)}{P(a_t=1)},\dots\right.\\&&\left. ,a_ta_{t-m}-P(a_ta_{t-m}=1),(r_t^2-E(r_t^2))\frac{P(a_ta_{t-m}=1)}{P(a_t=1)},\right.
\\&&\left. r_t^2-E(r_t^2),a_t-P(a_t=1)\right)',
\end{eqnarray*}
is such that $E(Z_t)=0$.

%
%
%
\end{proof}

\begin{proof}[Proof of Proposition \ref{index-prop-power}] Using the SLLN of Hansen (1991) for the numerators and denominators in $\widetilde{\Delta}_h$, we have
$$\widetilde{\Delta}_h\stackrel{a.s.}{\longrightarrow}\Delta_h.$$
From (\ref{hatilde}), we get the desired results if at least some $\Delta_h\neq0$, $h\in\{1,\dots,m\}$.
\end{proof}


\newpage

\section*{Tables and Figures}

\begin{table}[hh]\!\!\!\!\!\!\!\!\!\!
\begin{center}
\caption{\small{The sample sizes and zero return probabilities (in \%) of illiquid Santiago financial market stocks. The first order correlations with corrections assuming stationary zero returns occurrence ($\hat{\rho}_{pr}(1)$), and the non stationary variance and zero returns probability correction ($\hat{\rho}_{vpr}(1)$). The index $\hat{\kappa}_1$ is also displayed. The above notations are introduced in Sections \ref{first-order} and \ref{var-prob}. The relevant serial correlations are in bold type, together with their standard deviations.}}
\begin{tabular}{c|c|c||c|c||c|}\cline{2-6}
     & n & 1-$\hat{\gamma}_a(0)$& $\hat{\rho}_{pr}(1)$& $\hat{\rho}_{vpr}(1)$& $\hat{\kappa}_1$ \\
  \hline
  \multicolumn{1}{|c|}{{\small Aguas}}& 1958 &2.40& $\textbf{-0.13}_{0.09}$ & -0.13  & 0.00 \\
  \hline
  \multicolumn{1}{|c|}{{\small Clinica las Condes}} & 1958  & 50.82& $\textbf{0.03}_{0.07}$ & 0.02 & 0.01  \\
  \hline\hline
  \multicolumn{1}{|c|}{{\small Cintac}} & 5159&67.76 & 0.09 & $\textbf{0.11}_{0.06}$ & 0.27 \\
  \hline
  \multicolumn{1}{|c|}{{\small Cruzados}} & 2594 & 70.39   & $\textbf{-0.11}_{0.11}$  & -0.11 & 0.00 \\
  \hline
    \multicolumn{1}{|c|}{{\small HF }} & 1941 & 58.84   & $\textbf{0.41}_{0.09}$ & 0.37 & 0.02 \\
  \hline
    \multicolumn{1}{|c|}{{\small Molymet}}& 4896 & 72.06  & 0.09  &  $\textbf{0.13}_{0.05}$ & 0.28 \\
  \hline
    \multicolumn{1}{|c|}{{\small Paz}} & 3392& 27.49  & $\textbf{0.11}_{0.05}$ & 0.10 & 0.00  \\
  \hline
    \multicolumn{1}{|c|}{{\small Vapores}} & 5162 & 19.96  &  $\textbf{0.13}_{0.02}$ &  0.13 & 0.00 \\
  \hline
  \multicolumn{1}{|c|}{{\small Security}} & 5188 & 37.72  & $\textbf{-0.02}_{0.03}$  &  -0.02 & 0.00 \\
  \hline
  \multicolumn{1}{|c|}{{\small Provida}} & 5179 &  40.08 &  $\textbf{-0.01}_{0.03}$ & -0.01 & 0.00 \\
  \hline
\end{tabular}
\label{stock-corr}
\end{center}
\end{table}

\begin{table}[hh]\!\!\!\!\!\!\!\!\!\!
\begin{center}
\caption{\small{The relative frequencies (in \%) of empirical "pr" and "vpr" serial correlation outside their 5\% level asymptotic confidence bounds (see Proposition \ref{propostu2bis} for the "pr" correction, and Proposition \ref{propostu3bis} for the "vpr" correction).}}
{\footnotesize\begin{tabular}{|c|c|c||c|c||c|c|}\cline{3-7}
   \multicolumn{2}{c|}{  } & $n$ &\multicolumn{2}{c||}{500}&\multicolumn{2}{c|}{1500} \\
   \hline
  prob. & var. & Case & pr & vpr & pr & vpr \\
  \hline\hline
  homo. & homo. & (iv) & 2.9 & 2.9 & 2.8 & 2.8 \\
  \hline
  homo. & het. & (iii) & 2.6 & 2.6 & 2.9 & 2.9 \\
  \hline
  het. & homo. & (ii)  & 3.0 & 3.0 & 3.9 & 3.9 \\
  \hline
  het. & het. & (i)    & 3.3 & 3.3 & 3.7 & 3.7 \\
  \hline
\end{tabular}}
\label{conf-int}
\end{center}
\end{table}

\begin{figure}[h]\!\!\!\!\!\!\!\!\!\!
\vspace*{18cm}

\protect \includegraphics{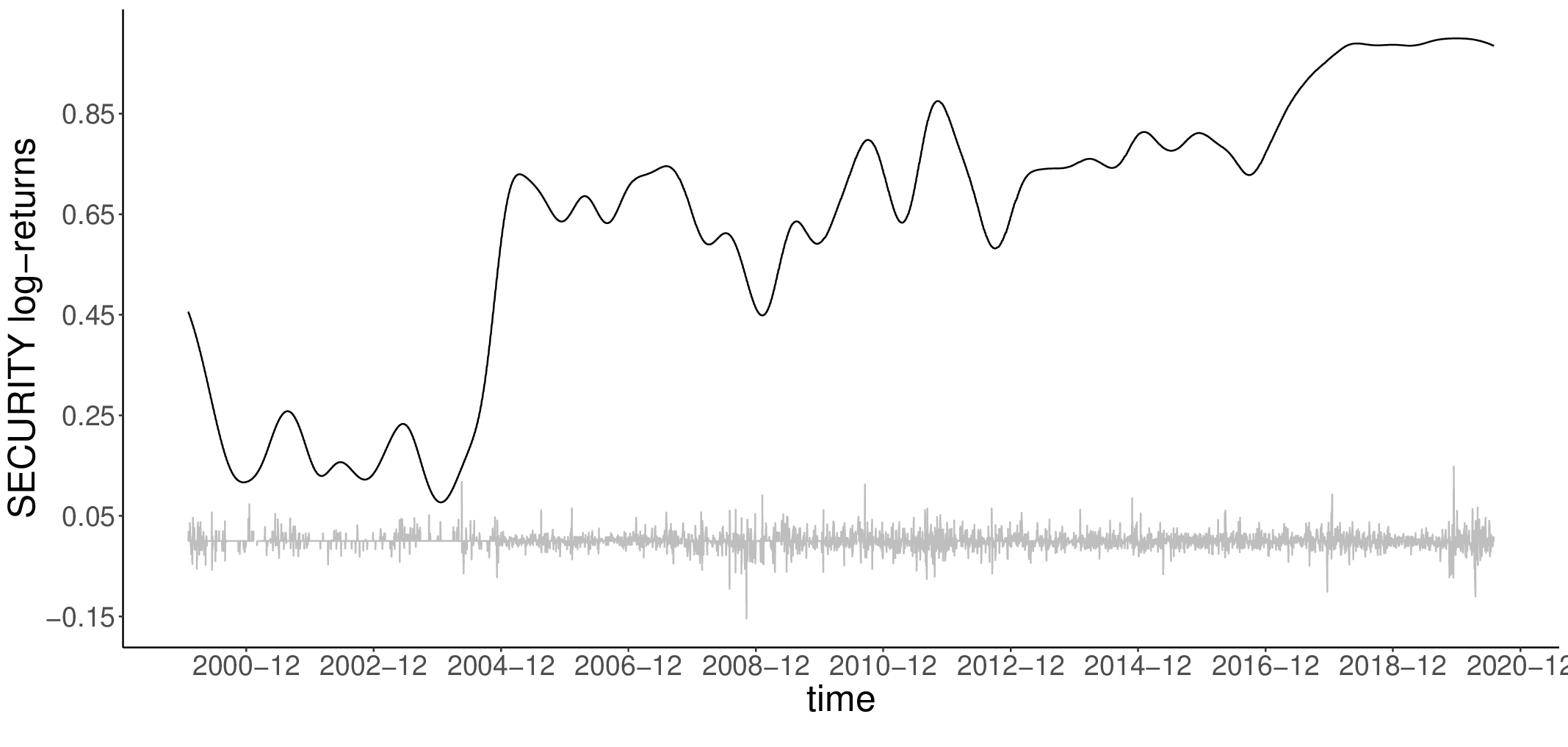}
\protect \includegraphics{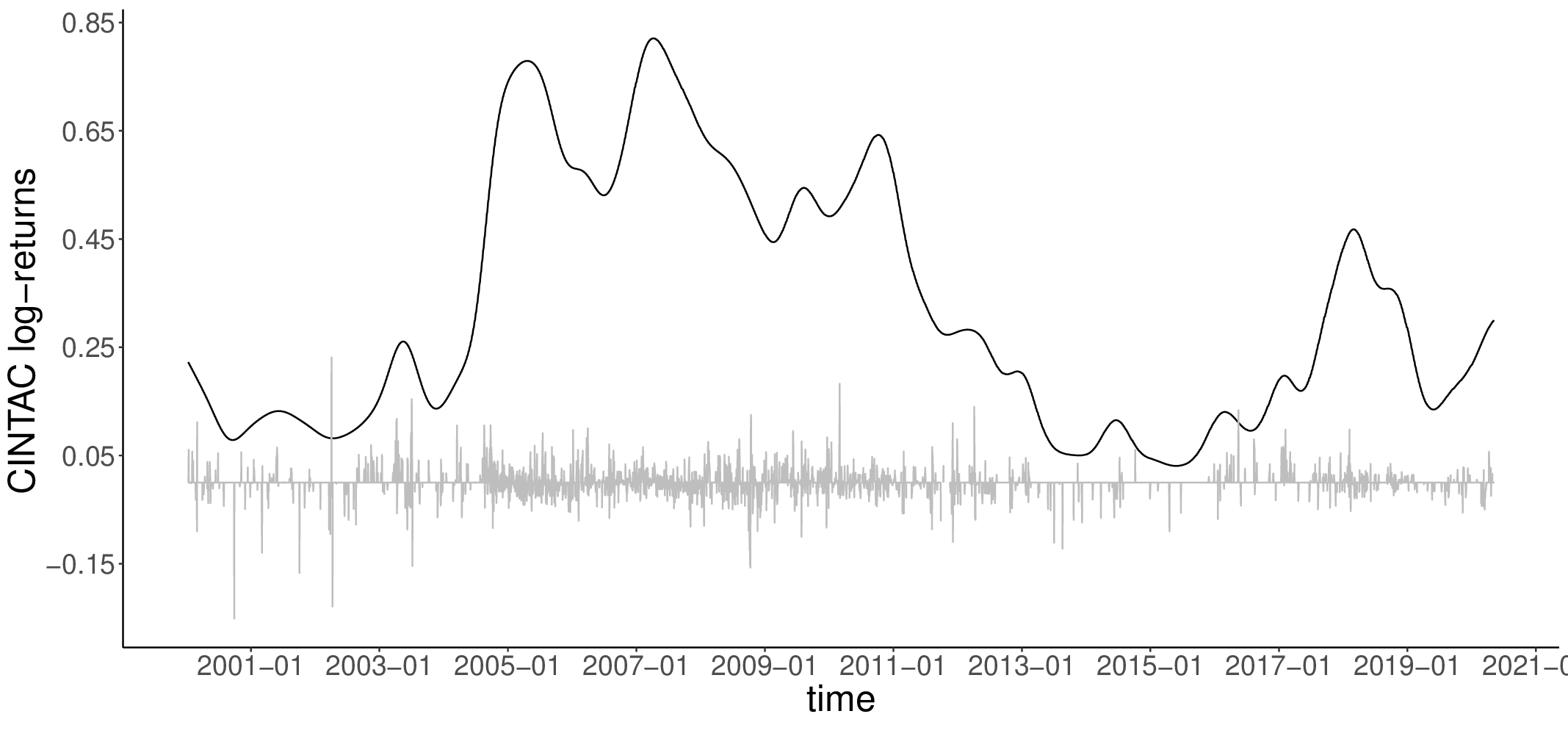}
\protect \includegraphics{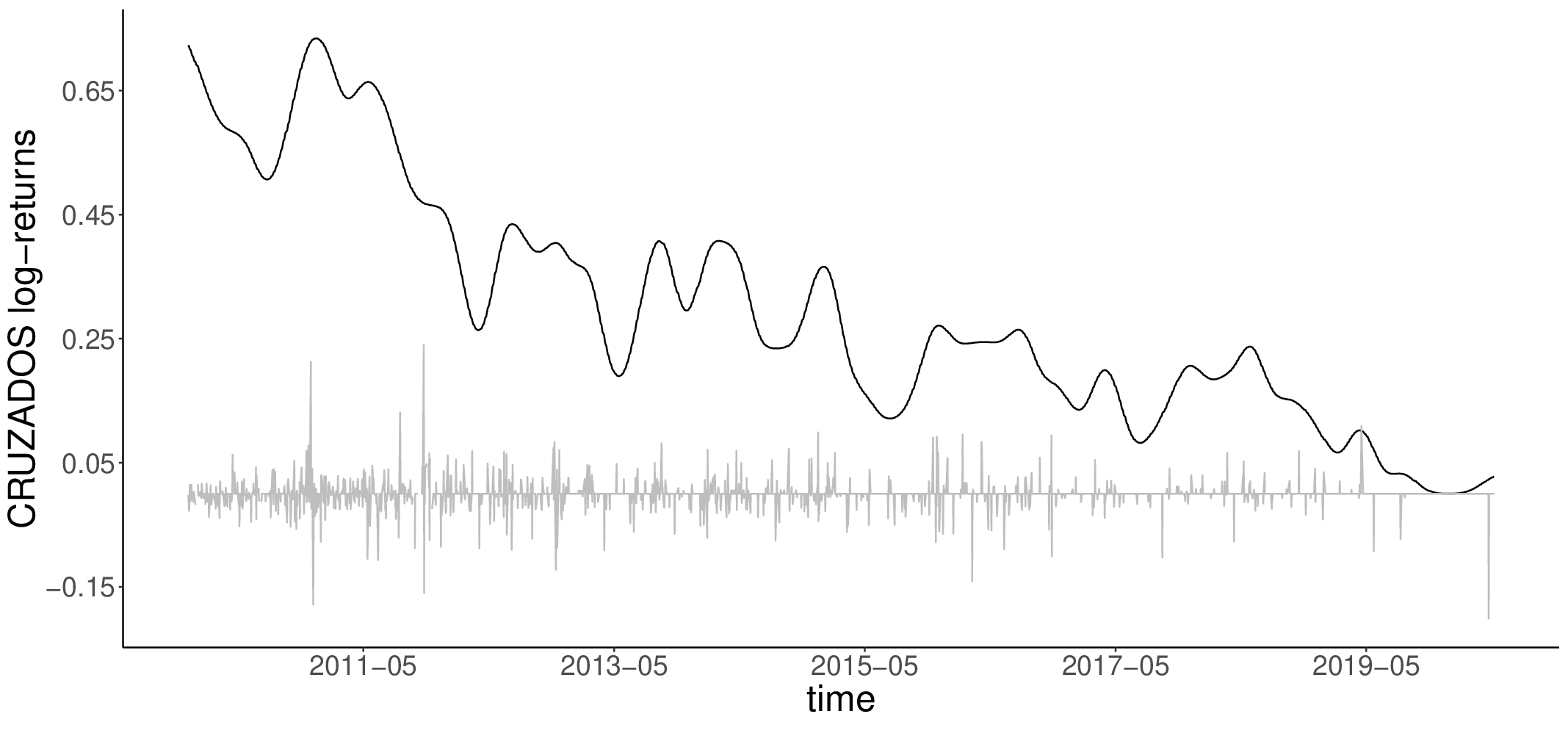}
\protect \includegraphics{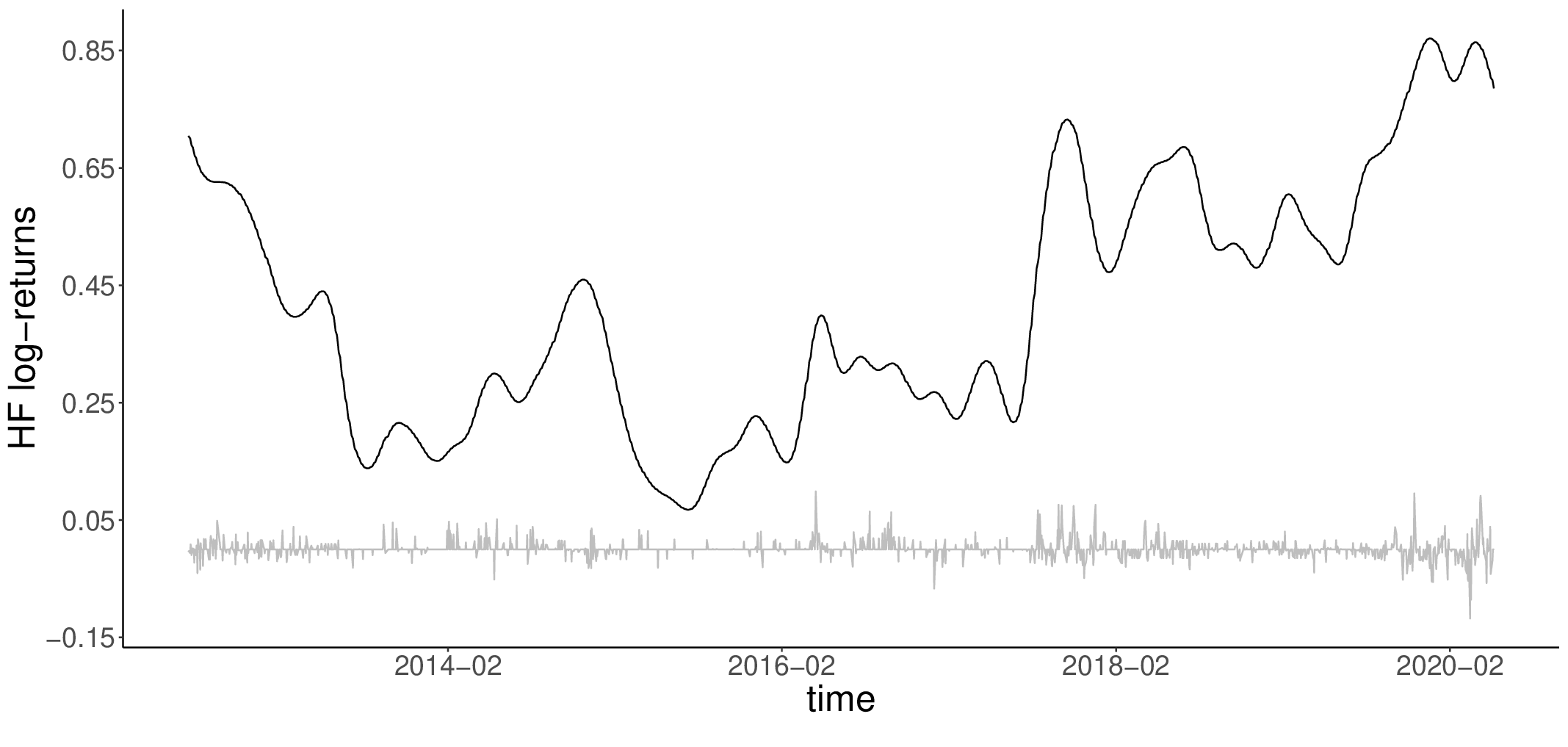}
\protect \includegraphics{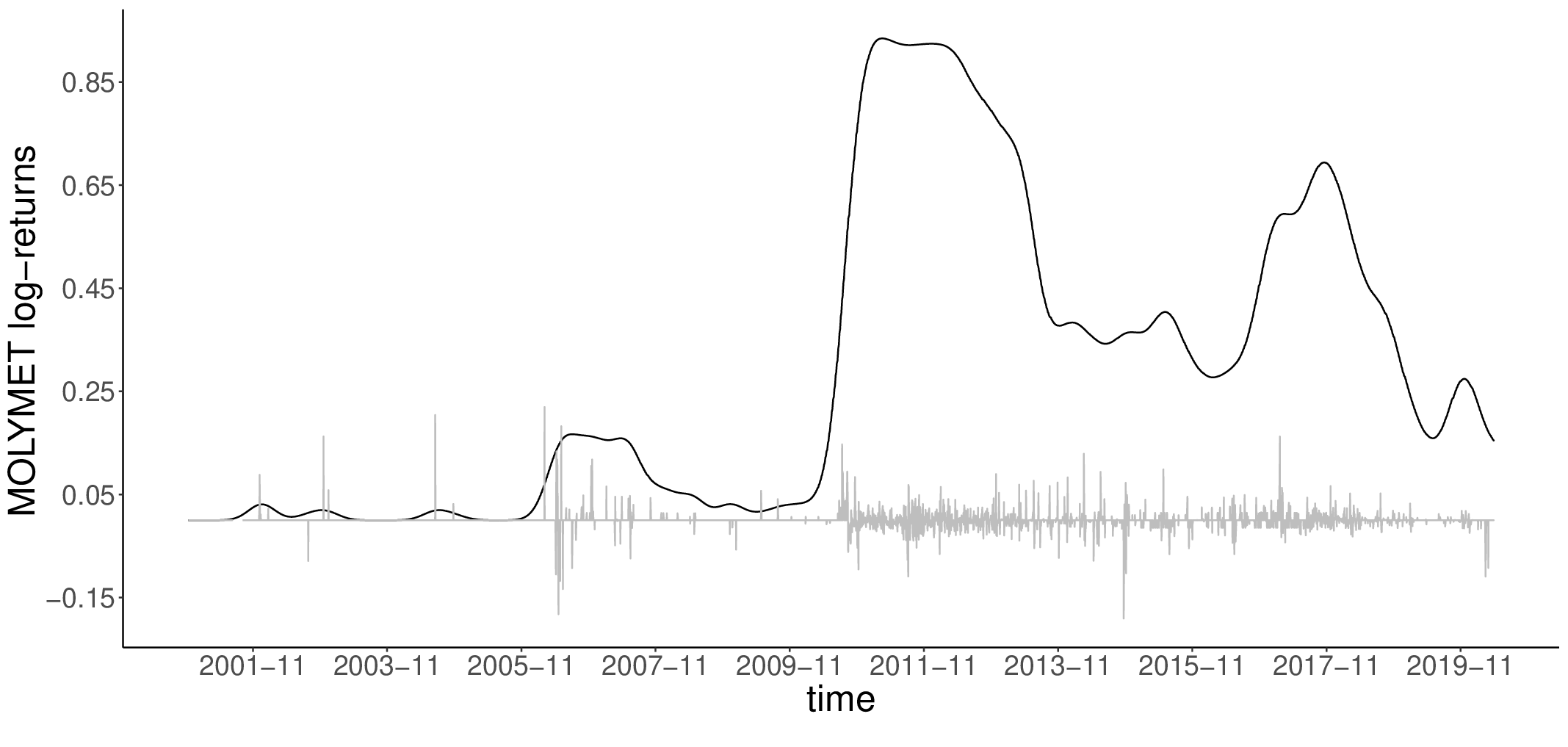}
\protect \includegraphics{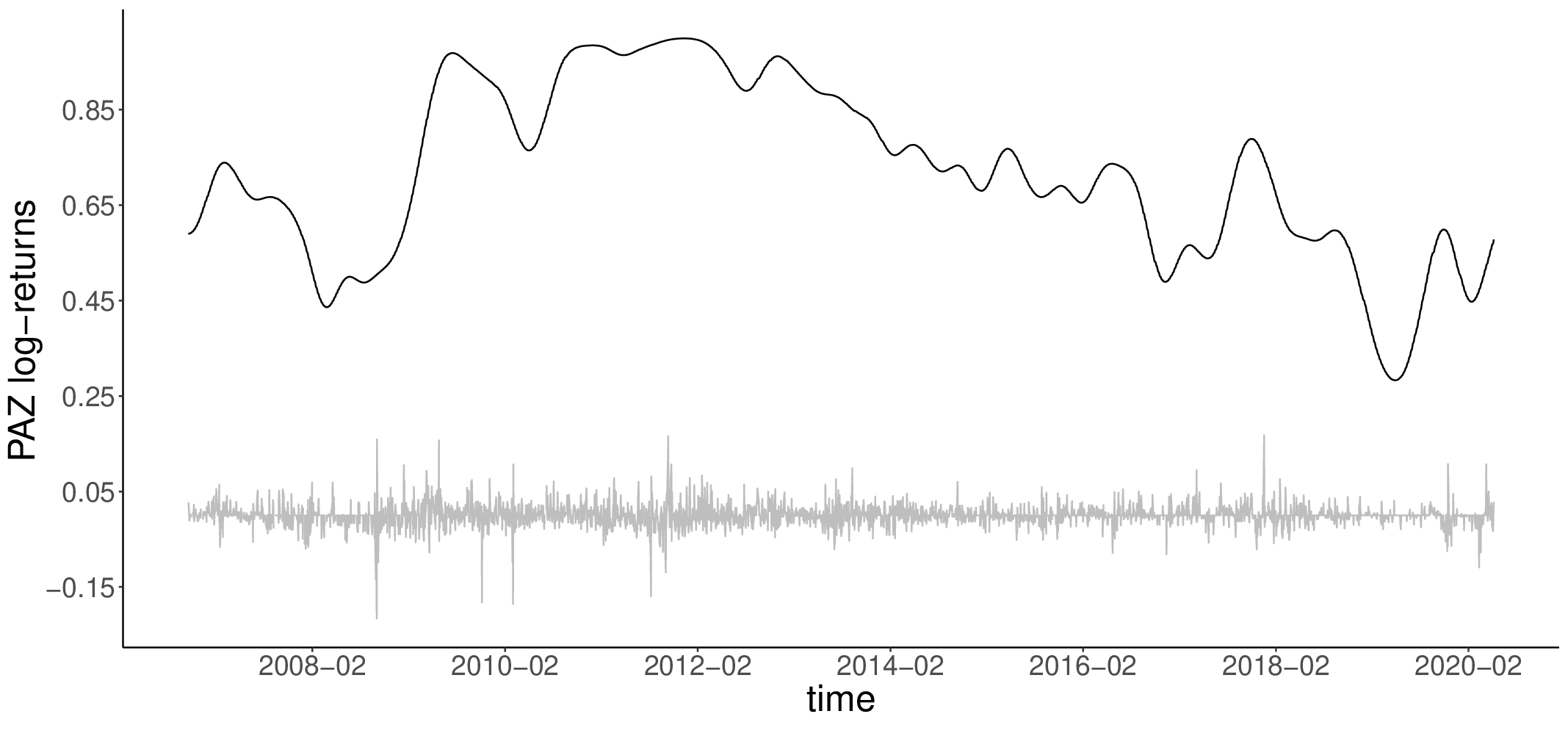}
\protect \includegraphics{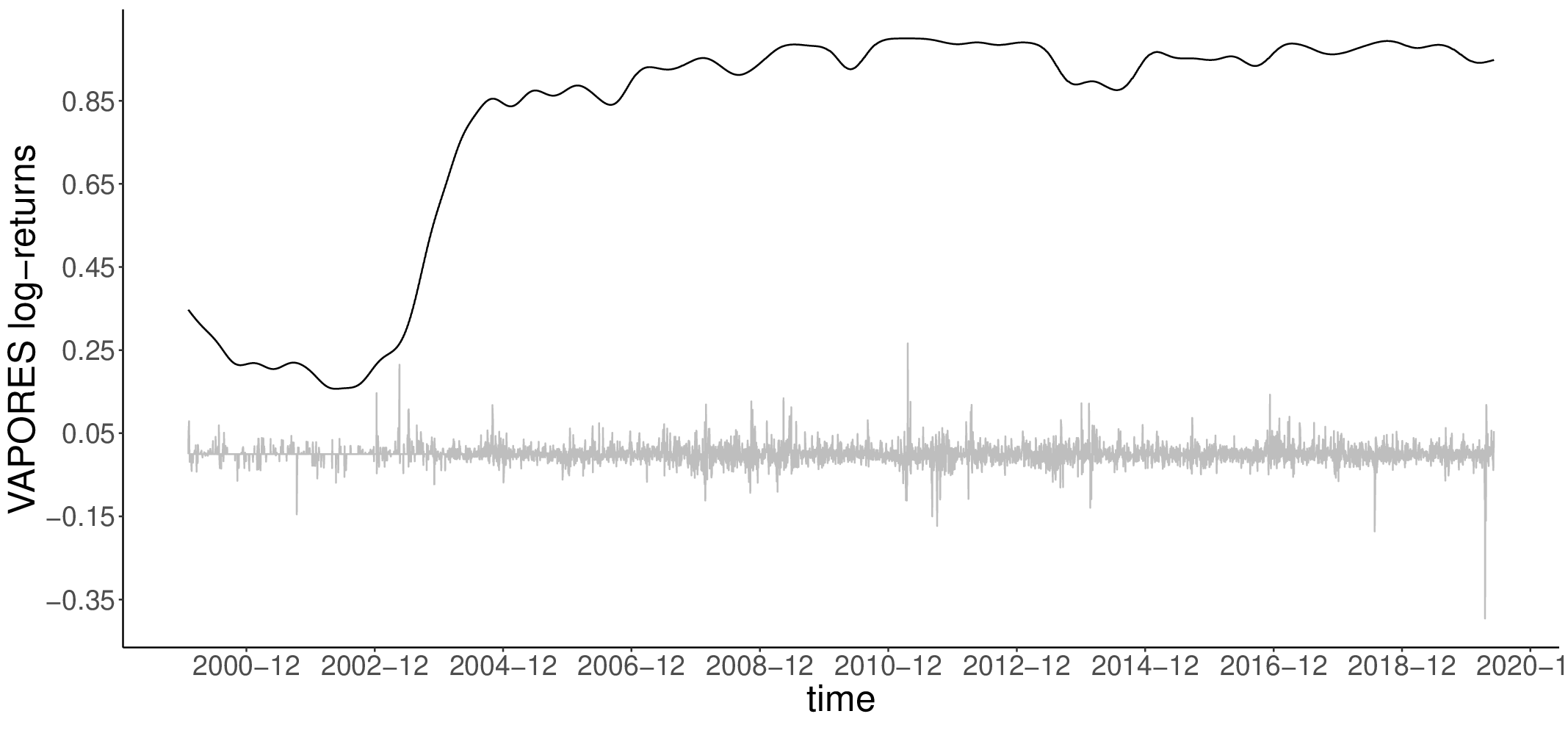}
\protect \includegraphics{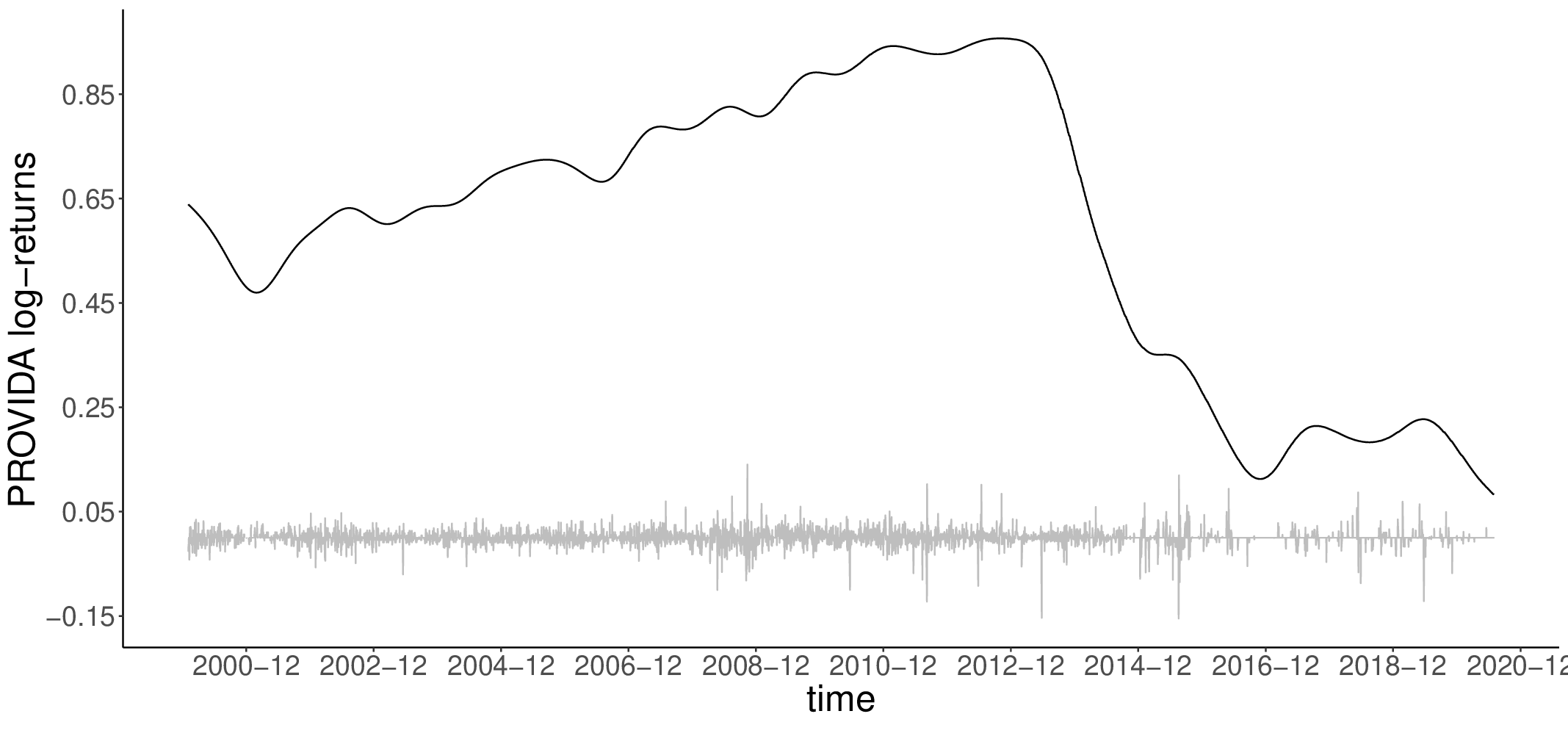}
\caption{\label{ns-stocks}
{\footnotesize The log-returns of different illiquid stocks of the Santiago financial market, which seem to have non-constant daily price change probabilities. The kernel smoothing of the $a_t$ values are displayed in full line. Data source: Yahoo Finance.}}
\end{figure}

\begin{figure}[h]\!\!\!\!\!\!\!\!\!\!
\vspace*{4cm}

\protect \includegraphics{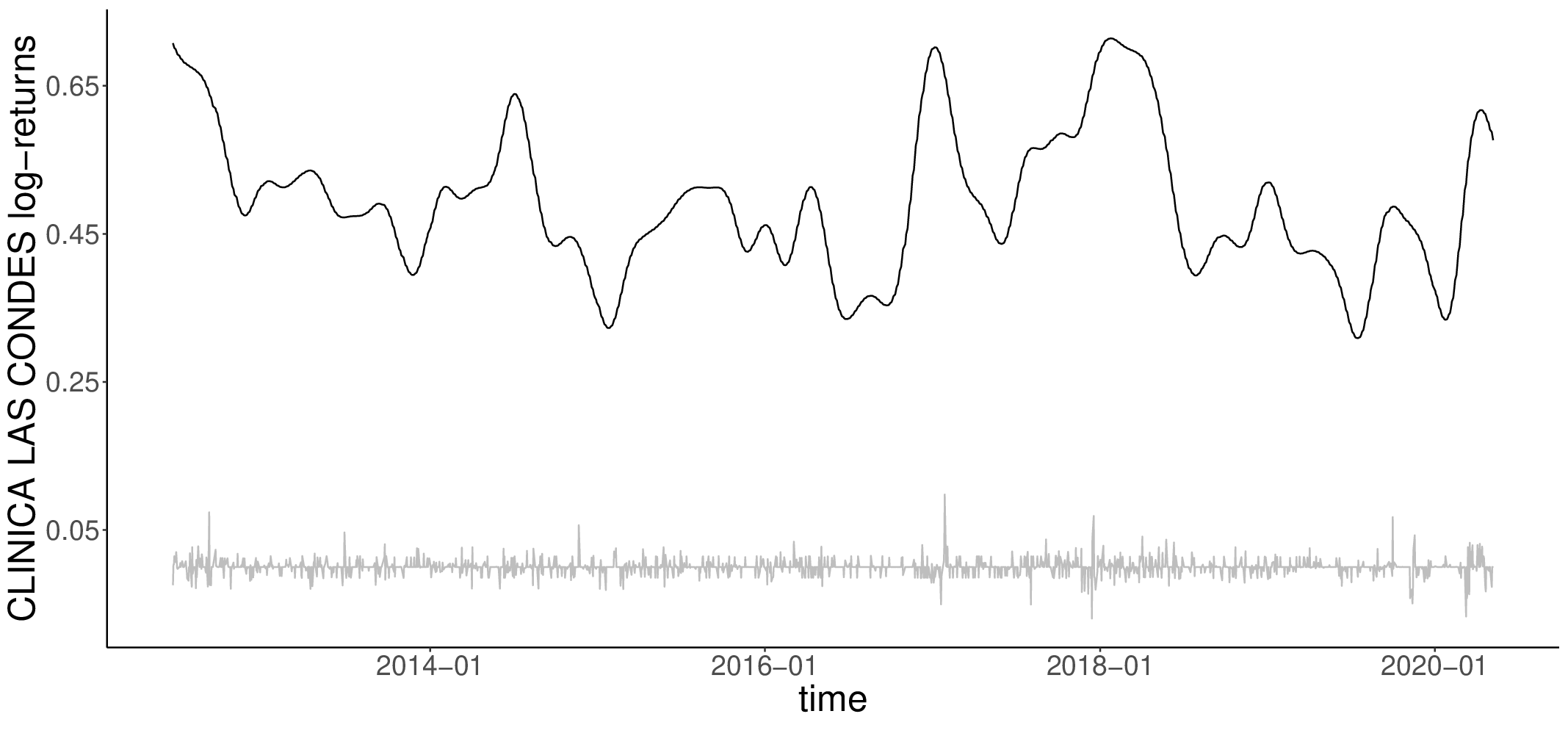}
\protect \includegraphics{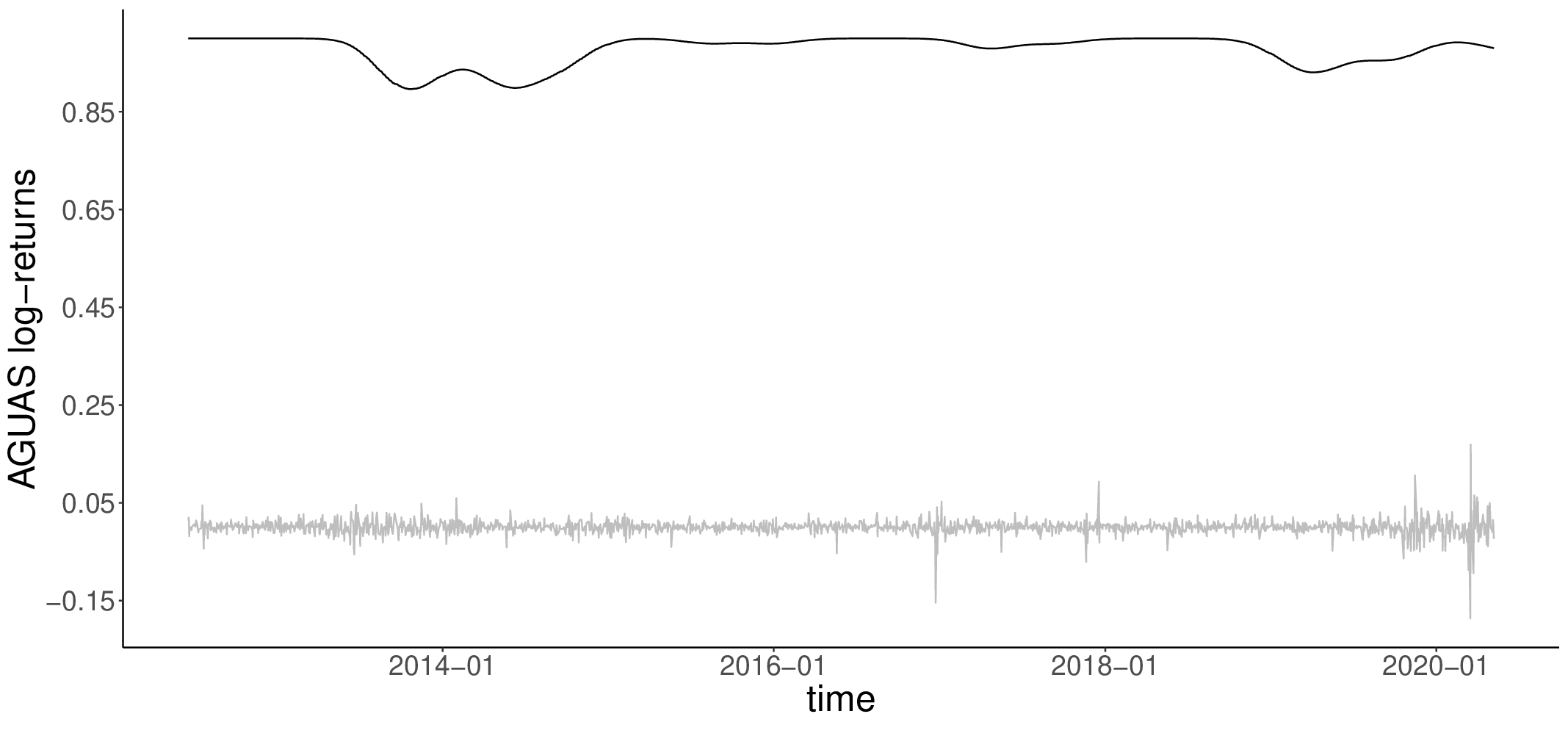}
\caption{\label{s-stocks}
{\footnotesize The same as for Figure \ref{ns-stocks}, but for stocks that seem to have a stationary behavior.}}
\end{figure}

\begin{figure}[h]
\begin{center}
 \includegraphics[scale=0.35]{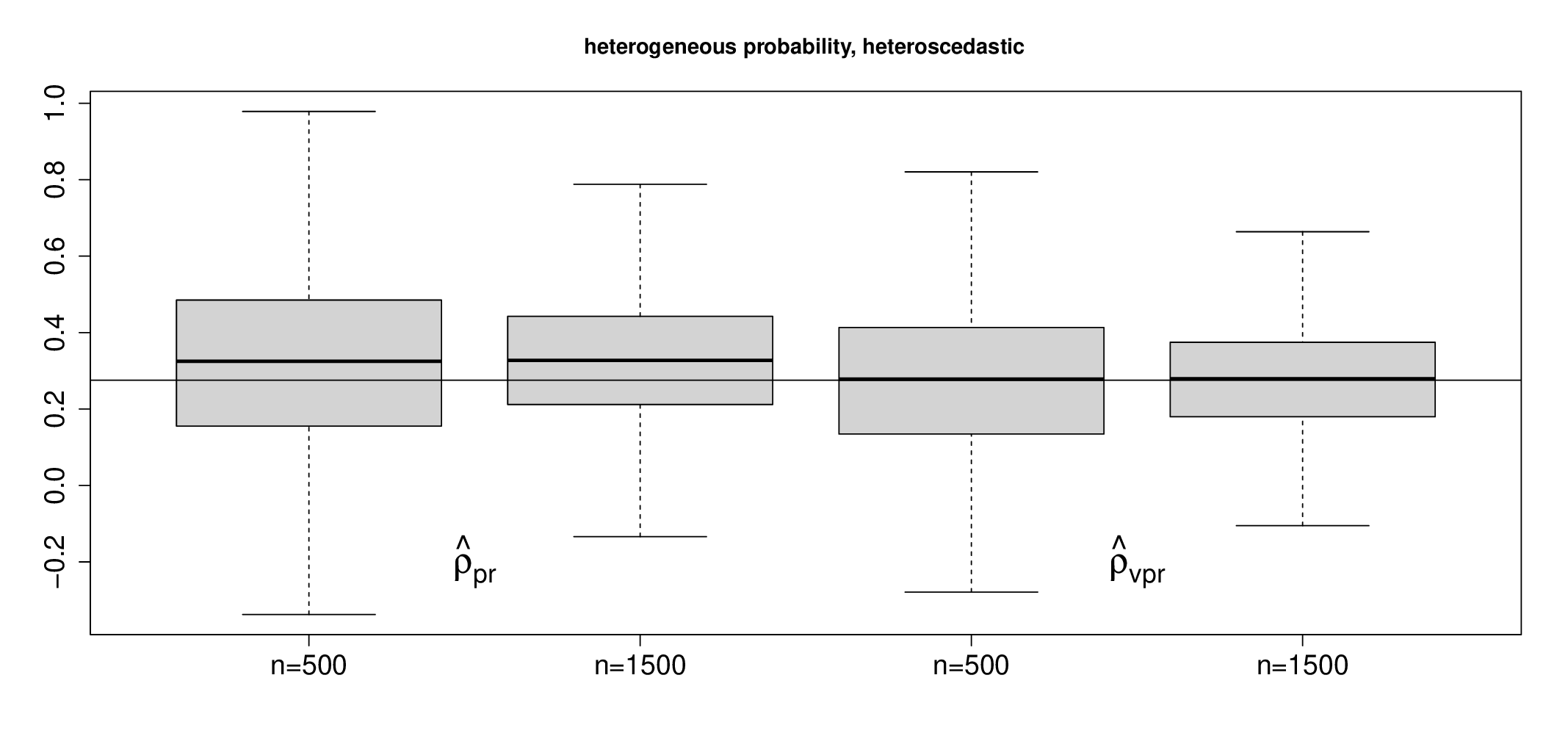}
\end{center}
\caption{The boxplots of the "pr" and "vpr" autocorrelations of order 1, obtained from $N=3000$ independent trajectories of (\ref{dgp}) using the case (i) (i.e. both time-varying $P(a_t=1)$ and $\sigma_t$).}
\label{fig-sim-1}
\end{figure}

\clearpage

\begin{figure}[ht]
  \centering
  \includegraphics[scale=0.35]{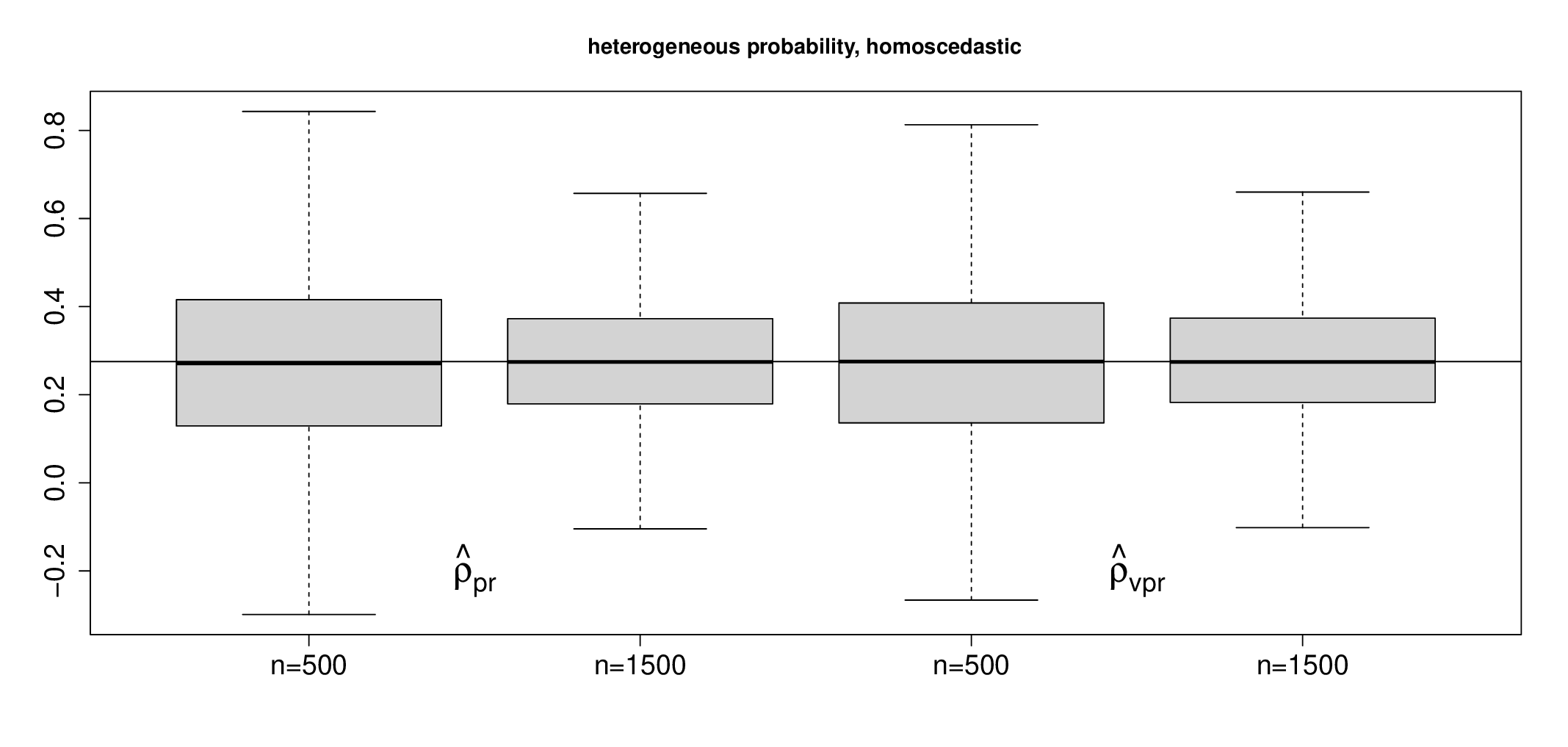}
  \caption{The same as in Figure \ref{fig-sim-1}, but for the case (ii) (i.e. constant $P(a_t=1)$ and time-varying $\sigma_t$).}
\label{fig-sim-2}
  \vspace*{\floatsep}

  \includegraphics[scale=0.35]{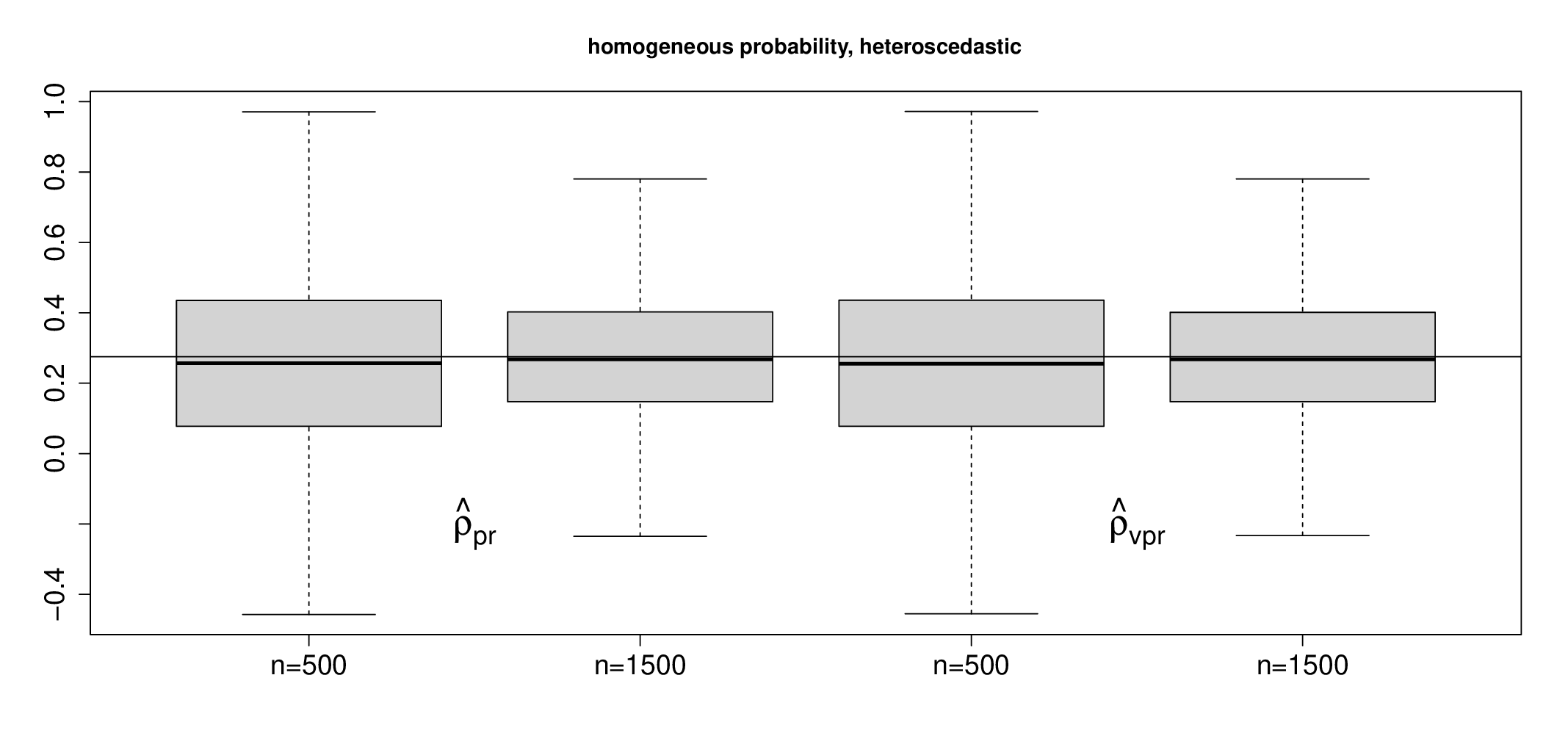}
  \caption{The same as in Figure \ref{fig-sim-1}, but for the case (iii) (i.e. time-varying $P(a_t=1)$ and constant $\sigma_t$).}
\label{fig-sim-2bis}
    \vspace*{\floatsep}

  \includegraphics[scale=0.35]{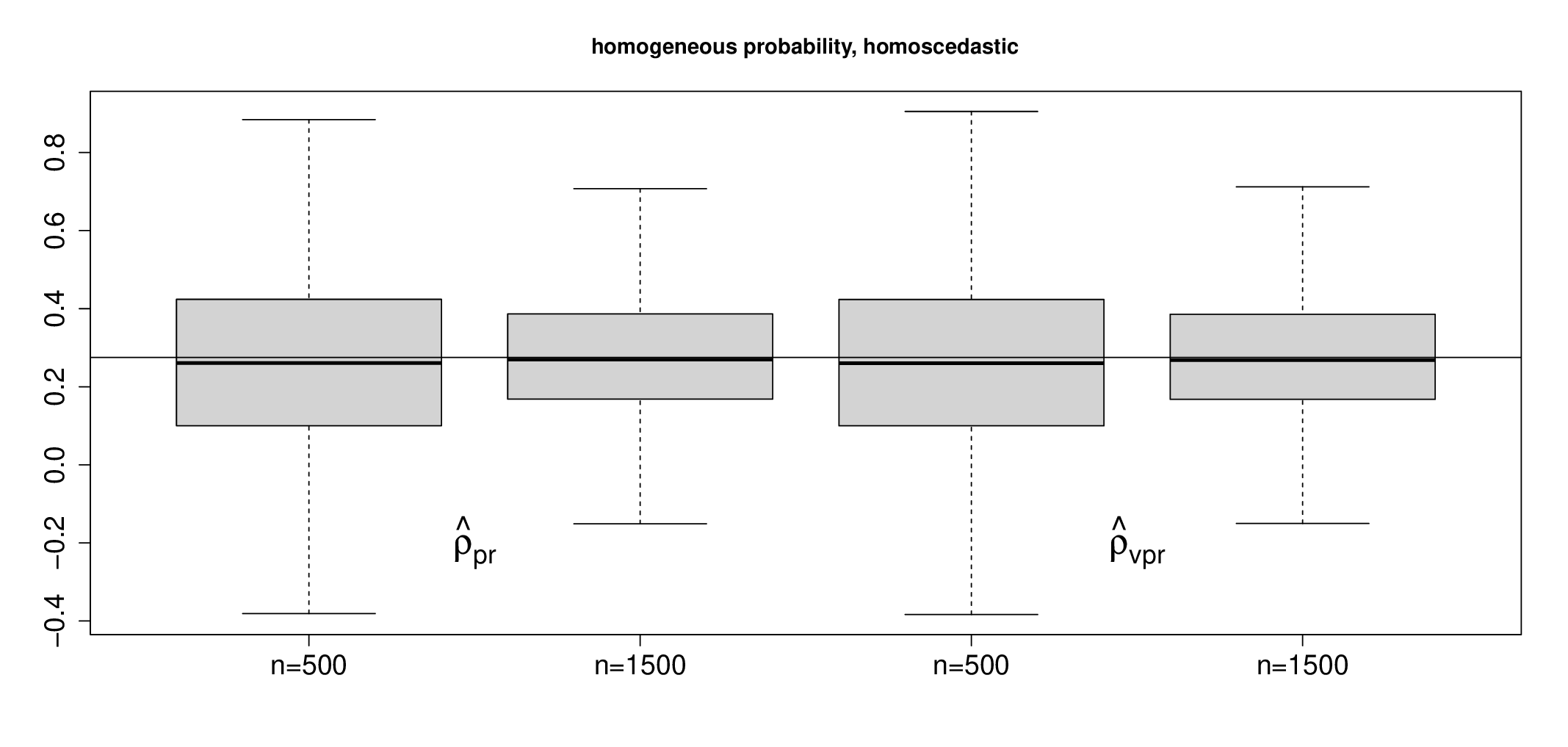}
  \caption{The same as in Figure \ref{fig-sim-1}, but for the case (iv) (i.e. both constant $P(a_t=1)$ and $\sigma_t$).}
\label{fig-sim-2bisbis}
\end{figure}

%
%

\clearpage

\begin{figure}[ht]
  \centering
  \includegraphics[scale=0.35]{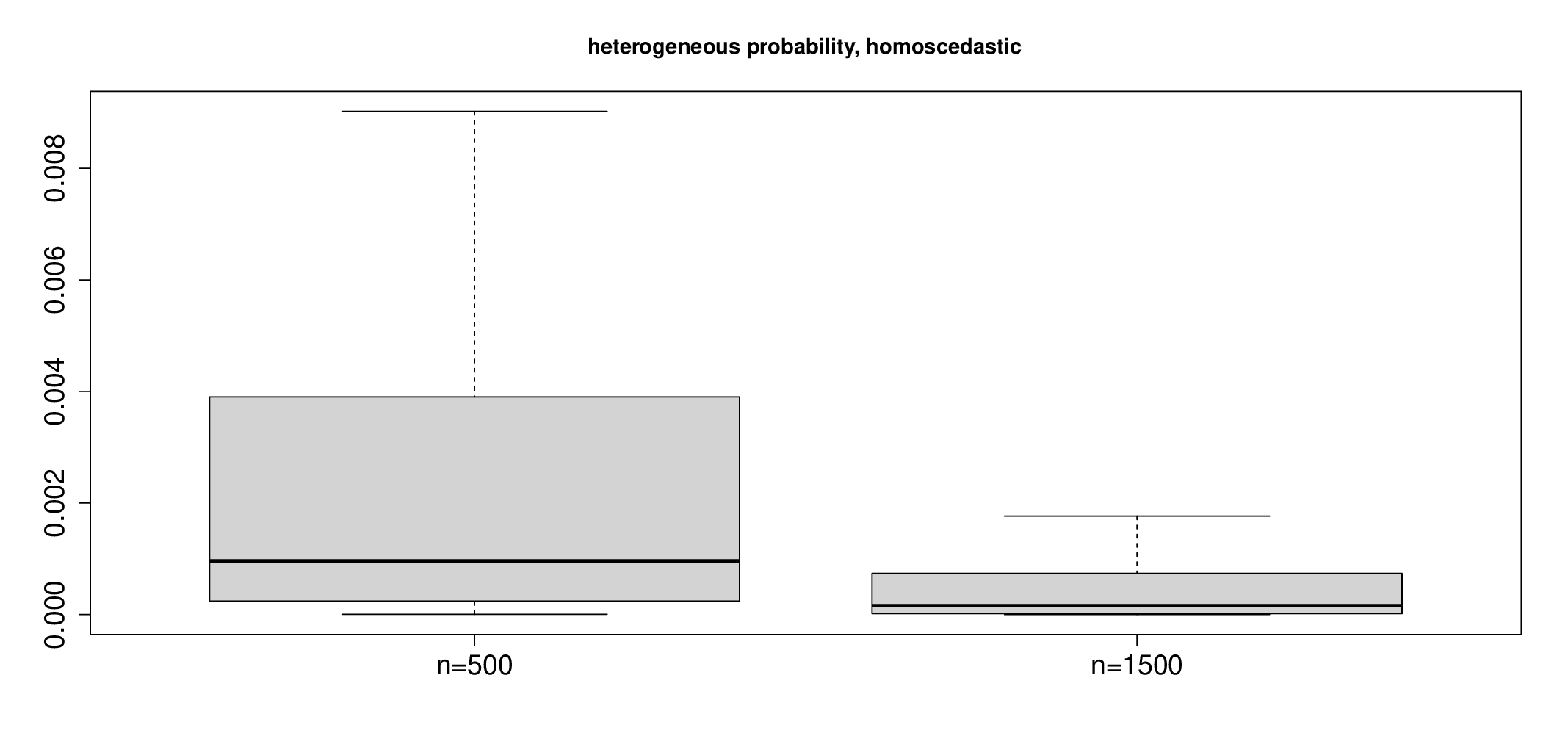}
  \caption{The boxplots of the index $\hat{\kappa}_1$, obtained from $N=3000$ independent trajectories of (\ref{dgp}) considering case (iv) (i.e. both constant $P(a_t=1)$ and $\sigma_t$).}
\label{fig-sim-ind-1}
  \vspace*{\floatsep}

  \includegraphics[scale=0.35]{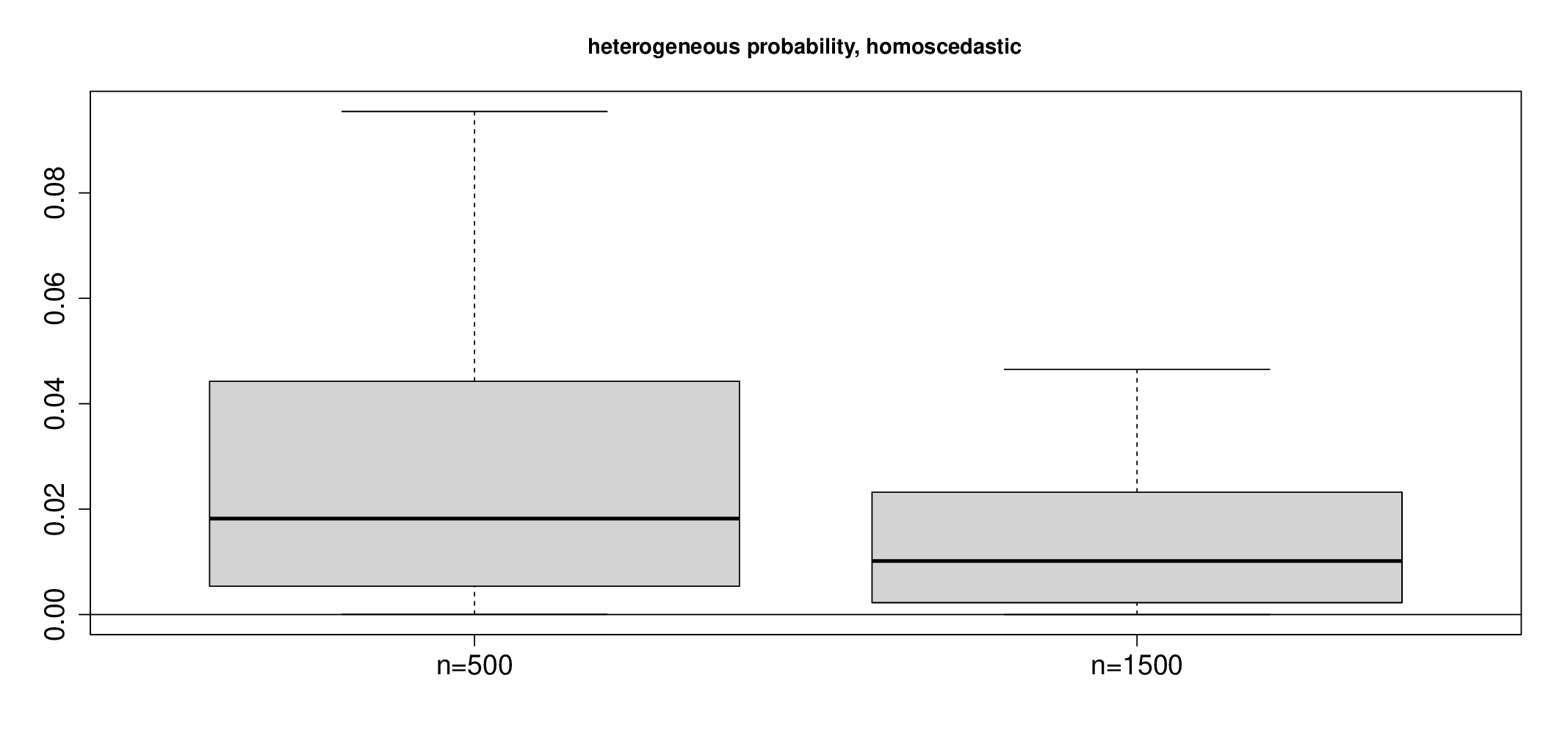}
  \caption{The same as in Figure \ref{fig-sim-ind-1}, but for the case (ii) (i.e. time-varying $P(a_t=1)$ and constant $\sigma_t$).}
\label{fig-sim-ind-1bis}
    \vspace*{\floatsep}

  \includegraphics[scale=0.35]{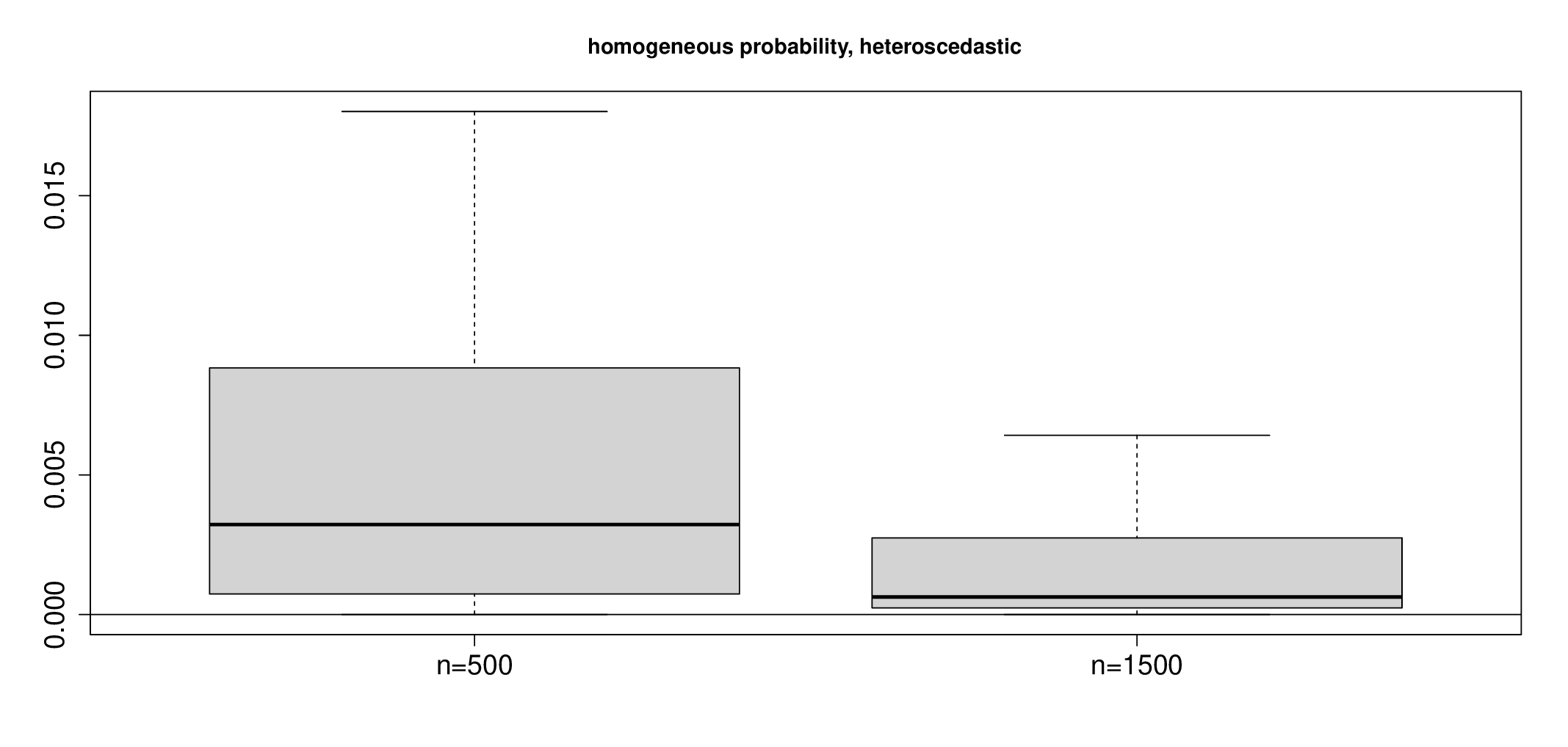}
  \caption{The same as in Figure \ref{fig-sim-ind-1}, but for the case (iii) (i.e. constant $P(a_t=1)$ and time-varying $\sigma_t$).}
\label{fig-sim-ind-1bisbis}
\end{figure}

\clearpage

\begin{figure}[h]
\begin{center}
 \includegraphics[scale=0.35]{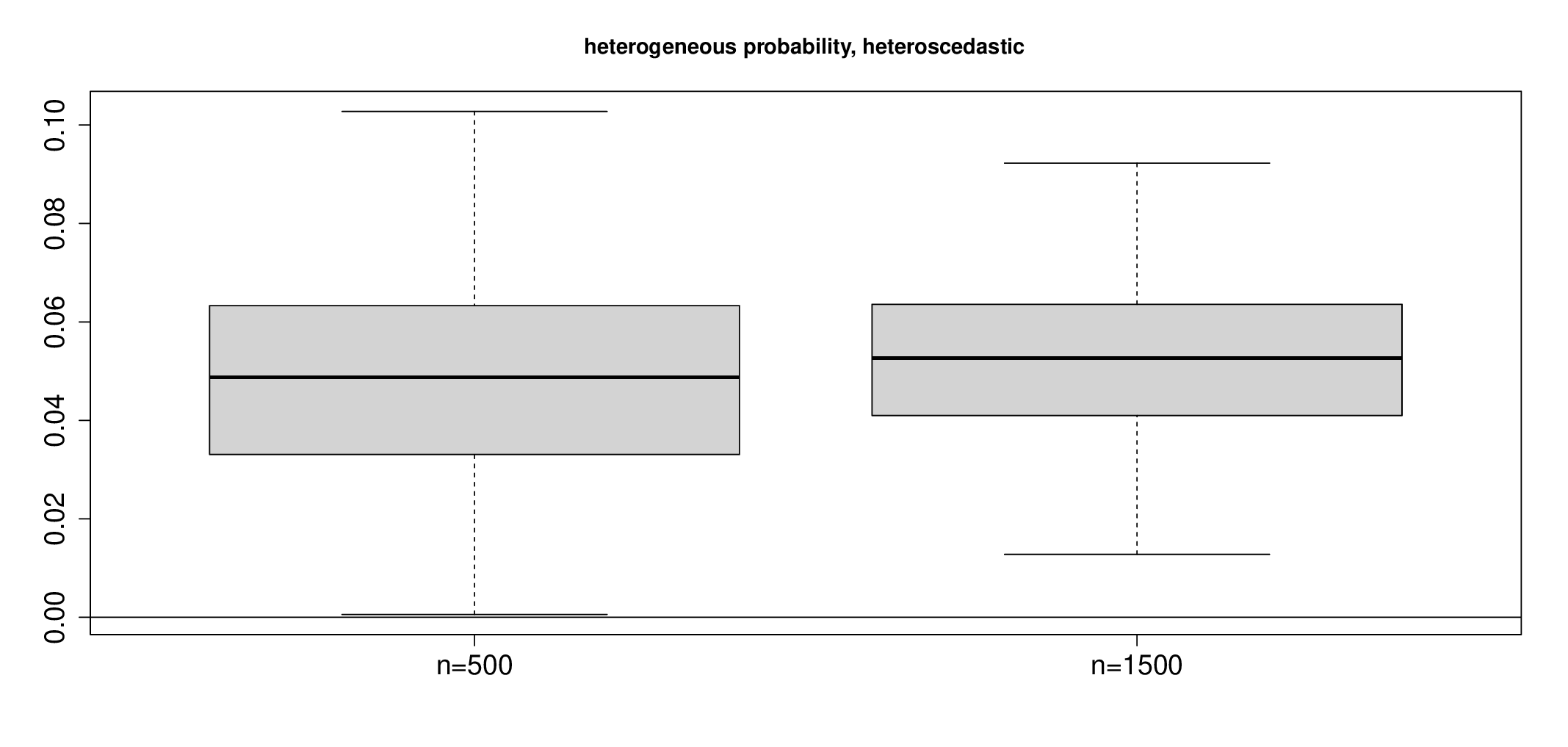}
\end{center}
\caption{The same as in Figure \ref{fig-sim-ind-1}, but for the case (i) (i.e. both time-varying $P(a_t=1)$ and $\sigma_t$).}
\label{fig-sim-ind-2}
\end{figure}

%
%
%

 \end{document}